\newtheorem{theorem}{Theorem}
\newcommand{\figwidth}{7.8}
\begin{document}
\title{\huge Movable Antenna Enabled Near-Field Communications: Channel Modeling and Performance Optimization}
\author{Lipeng Zhu, ~\IEEEmembership{Member,~IEEE,}
		Wenyan Ma,~\IEEEmembership{Graduate Student Member,~IEEE,}
		Zhenyu Xiao,~\IEEEmembership{Senior Member,~IEEE,}
		and Rui Zhang,~\IEEEmembership{Fellow,~IEEE}
	\vspace{-0.3 cm}
	\thanks{L. Zhu and W. Ma are with the Department of Electrical and Computer Engineering, National University of Singapore, Singapore 117583 (e-mail: zhulp@nus.edu.sg, wenyan@u.nus.edu).}
	\thanks{Z. Xiao is with the School of Electronic and Information Engineering, Beihang University, Beijing, China 100191. (e-mail: xiaozy@buaa.edu.cn)}
	\thanks{R. Zhang is with School of Science and Engineering, Shenzhen Research Institute of Big Data, The Chinese University of Hong Kong, Shenzhen, Guangdong 518172, China (e-mail: rzhang@cuhk.edu.cn). He is also with the Department of Electrical and Computer Engineering, National University of Singapore, Singapore 117583 (e-mail: elezhang@nus.edu.sg).}
}

\maketitle


\begin{abstract}	
	Movable antenna (MA) technology offers promising potential to enhance wireless communication by allowing flexible antenna movement. To maximize spatial degrees of freedom (DoFs), larger movable regions are required, which may render the conventional far-field assumption for channels between transceivers invalid. In light of it, we investigate in this paper MA-enabled near-field communications, where a base station (BS) with multiple movable subarrays serves multiple users, each equipped with a fixed-position antenna (FPA). First, we extend the field response channel model for MA systems to the near-field propagation scenario. Next, we examine MA-aided multiuser communication systems under both digital and analog beamforming architectures. For digital beamforming, spatial division multiple access (SDMA) is utilized, where an upper bound on the minimum signal-to-interference-plus-noise ratio (SINR) across users is derived in closed form. A low-complexity algorithm based on zero-forcing (ZF) is then proposed to jointly optimize the antenna position vector (APV) and digital beamforming matrix (DBFM) to approach this bound. For analog beamforming, orthogonal frequency division multiple access (OFDMA) is employed, and an upper bound on the minimum signal-to-noise ratio (SNR) among users is derived. An alternating optimization (AO) algorithm is proposed to iteratively optimize the APV, analog beamforming vector (ABFV), and power allocation until convergence. For both architectures, we further explore MA design strategies based on statistical channel state information (CSI), with the APV updated less frequently to reduce the antenna movement overhead. Simulation results demonstrate that our proposed algorithms achieve performance close to the derived bounds and also outperform the benchmark schemes using dense or sparse arrays with FPAs.
\end{abstract}
\begin{IEEEkeywords}
	Movable antenna (MA), near-field communication, antenna position optimization, beamforming, sparse array.
\end{IEEEkeywords}

%
\IEEEpeerreviewmaketitle

\section{Introduction}
\IEEEPARstart{T}{he} advance of multiple-input multiple output (MIMO) technologies has significantly improved the capacity and reliability of modern wireless communication systems \cite{Paulraj2004Anover}. As the number of mobile devices and the volume of data traffic increase exponentially, more antennas have been equipped at base stations (BSs) to fulfill the communication requirements of wireless networks. In this context, MIMO has evolved towards massive MIMO \cite{Larsson2014massive,you2024next} and extremely large-scale MIMO (XL-MIMO) \cite{lu2024nearXL,Wang2024XLMIMO} for attaining higher beamforming gains and enhanced spatial multiplexing. However, deploying hundreds or even thousands of antennas as well as their associated radio frequency (RF) front ends at the BS renders exorbitant hardware cost, energy consumption, and signal processing overhead, which hinders the efficiency of future wireless networks.

To address the limitations of existing MIMO systems, movable antenna (MA) technology has gained increasing attention recently as a promising solution to enhance wireless communication performance through flexible antenna movement \cite{zhu2023MAMag}, which is also known as fluid antenna system (FAS) \cite{wong2022bruce} or flexible-position MIMO \cite{zheng2024flexible} from the perspective of flexibility in antenna position. Unlike conventional fixed-position antennas (FPAs), MAs can adjust their positions dynamically within specified transmitter (Tx) or receiver (Rx) regions, allowing them to fully exploit the spatial variations of wireless channels in real-time. This adaptability offers several key advantages, including the potential to significantly improve spatial diversity, beamforming gains, and spatial multiplexing performance. As such, MA systems can achieve superior communication performance using the same or even fewer antennas compared to traditional FPA-based systems \cite{zhu2022MAmodel,ma2022MAmimo,Wong2023opport,New2024fluid}.

The implementation of MAs/FASs can be mainly categorized into two types, including the mechanically driven and electronically driven methods \cite{ning2024movable}. Mechanically driven MAs adopt drive components to physically change the position of the antenna. For example, a motor-based MA architecture was proposed in \cite{zhu2023MAMag} to realize antenna movement in a three-dimensional (3D) space, which was then extended to the six-dimensional (6D) movement by further incorporating the 3D antenna rotation, i.e., 6DMA \cite{shao20246DMANet,shao20246DMA}. Besides, the fluid/liquid antenna is another way of implementing mechanically driven MAs, which usually adopts liquid materials as radiating elements confined to a container for enabling one-dimensional (1D) or two-dimensional (2D) movement \cite{paracha2019liquid}. In general, mechanically driven MAs have a low hardware cost but the response speed may be limited by the mechanical movement delay. In comparison, electronically driven methods can realize equivalent antenna movement through electronic reconfiguration, such as displaced phase center antennas (DPCAs) \cite{Mitha2022DPCA} and dense array antennas \cite{Alrabadi2013dense}, which are beyond the original scope of MAs and can be referred to as specific implementations of reconfigurable antennas or FAS if we follow its definition in \cite{wong2022bruce}. Despite the faster response speed of electronically driven antennas, the hardware cost is also higher, which scales with the number of antennas and their degree of reconfigurability. Although there are different ways of implementing MAs/FASs in practice, they share similar mathematical models in terms of antenna movement and propagation channels \cite{zhu2024historical}. 

Various studies have demonstrated the performance advantages of MA-aided communication systems. For example, the spatial diversity gain of a single MA for increasing the received signal power in a narrowband communication system has been explored in \cite{zhu2022MAmodel,New2024fluid}, which was further extended to the MA-aided wideband systems \cite{zhu2024wideband} and MA-aided multiple-input single-output (MISO) systems \cite{mei2024movable}. Moreover, the spatial multiplexing performance of MA-aided MIMO systems was characterized in \cite{ma2022MAmimo,New2024MIMOFAS,chen2023joint}. The MA-assisted multiuser communication has also been widely investigated \cite{Wong2023opport,zhu2023MAmultiuser,xiao2023multiuser,wu2024globallyMA,hu2024power,Yang2024movable,zhou2024MANOMA}, where the multiuser interference can be efficiently mitigated by the joint optimization of MAs' positions and precoder/combiner. In addition, the MA array-enabled beamforming has been studied from the aspect of flexible interference nulling and multi-beam forming \cite{zhu2023MAarray,ma2024multi}. Moreover, the applications of MA arrays in satellite communication systems \cite{zhu2024dynamic}, unmanned aerial vehicle (UAV) communications \cite{liu2024MAUAV}, secure transmission \cite{hu2024secure,tang2024secure}, full-duplex communication \cite{ding2024fullduplex}, wireless sensing \cite{ma2024sensing}, and integrated sensing and communications (ISAC) \cite{lyu2024flexible} were also explored.

However, all of the aforementioned works \cite{zhu2022MAmodel,Wong2023opport,New2024fluid,zhu2024wideband,mei2024movable,ma2022MAmimo,New2024MIMOFAS,chen2023joint,zhu2023MAmultiuser,xiao2023multiuser,wu2024globallyMA,hu2024power,Yang2024movable,zhou2024MANOMA,zhu2023MAarray,ma2024multi,zhu2024dynamic,liu2024MAUAV,hu2024secure,tang2024secure,ding2024fullduplex,ma2024sensing,lyu2024flexible} on MA systems assume far-field channel conditions. In practice, to exploit more degrees of freedom (DoFs) in the spatial domain, larger regions are required for antenna movement. As the size of the movement region and/or the carrier frequency increase, the corresponding Rayleigh distance also increases, which potentially invalidates the far-field assumption in future wireless systems. For example, at 28 GHz, the Rayleigh distance for a 100-wavelength region is approximately 200 meters \cite{liu2023near}, necessitating the near-field modeling for MA systems. In this context, recent works have begun addressing near-field MA communications. Specifically, the authors in \cite{chen2024joint} investigated the MIMO communications enabled by MAs under the near-field condition, where the positions of receive MAs were jointly optimized with the transmit covariance matrix. Besides, the authors in \cite{zhu2024suppressing} investigated the MISO beamforming under near-field line-of-sight (LoS) channels, where the MAs' positions were optimized to alleviate the beam squint effect for wideband systems. However, both studies focus on point-to-point transmission and do not address multiple access scenarios. Another recent work \cite{ding2024near} explored the MA-aided near-field multiuser communications, where the MAs are employed at both the BS and users for saving the downlink transmit power subject to a minimum rate requirement for each user. In particular, a two-loop dynamic neighborhood pruning particle swarm optimization (DNPPSO) algorithm was developed to obtain suboptimal solutions for the positions of MAs. Nevertheless, this approach applies only to digital beamforming systems and does not reveal the performance upper bound for MA systems under the near-field channel condition.

Different from \cite{chen2024joint,zhu2024suppressing,ding2024near}, we investigate in this paper MA array-enabled near-field communications, where multiple movable subarrays at the BS serve a set of distributed users each equipped with a single FPA. In particular, we characterize the multiuser communication performance under both the digital beamforming and analog beamforming architectures. The main contributions of this paper are summarized as follows:
\begin{itemize}
	\item We extend the field response channel model for MA systems to near-field conditions by adopting the spherical wave model to characterize the spatial variation of LoS and non-LoS (NLoS) channel paths. In particular, the antennas in each subarray are collocated and moved together by the same drive, which can achieve a flexible trade-off between the communication performance and hardware cost. Then, we investigate MA-aided multiuser communications under both digital beamforming and analog beamforming architectures.
	\item Specifically, for digital beamforming, we adopt spatial division multiple access (SDMA) and derive an upper bound on the minimum signal-to-interference-plus-noise ratio (SINR) among users in closed form. A low-complexity algorithm based on zero-forcing (ZF) is then developed to jointly optimize the antenna position vector (APV) and digital beamforming matrix (DBFM) for approaching this upper bound. We also propose a statistical channel state information (CSI)-based APV design strategy to reduce antenna movement overhead.
	\item On the other hand, for analog beamforming, orthogonal frequency division multiple access (OFDMA) is adopted and an upper bound on the minimum receive signal-to-noise ratio (SNR) among users is derived in closed form. An alternating optimization (AO)-based algorithm is then developed to iteratively optimize the APV, analog beamforming vector (ABFV), and power allocation until convergence. Furthermore, the extension to statistical CSI-based antenna position optimization is also provided to reduce antenna movement overhead.
	\item Finally, simulations are conducted to evaluate the performance of the considered MA-enabled near-field multiuser communication systems. Numerical results show that our proposed algorithms can approach the derived performance upper bounds. It is also shown that MA systems significantly outperform other benchmark schemes based on dense or spare arrays with FPAs. Moreover, our proposed statistical CSI-based antenna position optimization can achieve a performance comparable to that based on instantaneous CSI, which provides a viable practical solution for deploying MAs at BSs.
\end{itemize}

The remainder of this paper is organized as follows. Section II presents the system and channel models. Section III deals with performance analysis and optimization for MA systems using digital beamforming, while Section IV considers performance characterization and optimization under analog beamforming. Section V provides the simulation results, and conclusions are finally drawn in Section VI.

\textit{Notation}: $a$, $\mathbf{a}$, $\mathbf{A}$, and $\mathcal{A}$ denote a scalar, a vector, a matrix, and a set, respectively. $(\cdot)^{*}$, $(\cdot)^{\rm{T}}$, $(\cdot)^{\rm{H}}$, and $(\cdot)^{\dagger}$ denote the conjugate, transpose, conjugate transpose, and Moore-Penrose inverse, respectively. $[\mathbf{a}]_{n}$ denotes the $n$-th entry of vector $\mathbf{a}$. $[\mathbf{A}]_{i,j}$ denotes the  entry in row $i$ and column $j$ of matrix $\mathbf{A}$. $\mathcal{A} \setminus \mathcal{B}$ and $\mathcal{A} \cap \mathcal{B}$ represent the subtraction set and intersection set of $\mathcal{A}$ and $\mathcal{B}$, respectively. $\mathbb{R}^{M \times N}$ and $\mathbb{C}^{M \times N}$ represent the sets of real and complex matrices/vectors of dimension $M \times N$, respectively. $\mathbb{Z}$ denotes the set of integer. $\Re(\cdot)$, $|\cdot|$, and $\angle(\cdot)$ denote the real part, the amplitude, and the phase of a complex number or complex vector, respectively. $\|\cdot\|_{1}$ and $\|\cdot\|_{2}$ denote the 1-norm and 2-norm of a vector, respectively. $\|\cdot\|_{\mathrm{F}}$ represents the Frobenius norm of a matrix. $\partial(\cdot)$ denotes the partial differential of a variable/function. $\mathbf{a} \sim \mathcal{CN}(\mathbf{0}, \mathbf{\Omega})$ indicates that $\mathbf{a}$ is a circularly symmetric complex Gaussian (CSCG) random vector with mean zero and covariance $\mathbf{\Omega}$. $\mathbb{E}\{\cdot\}$ is the expectation of a random variable. 

\section{System and Channel Model}
\begin{figure}[t]
	\begin{center}
		\includegraphics[width=8.8 cm]{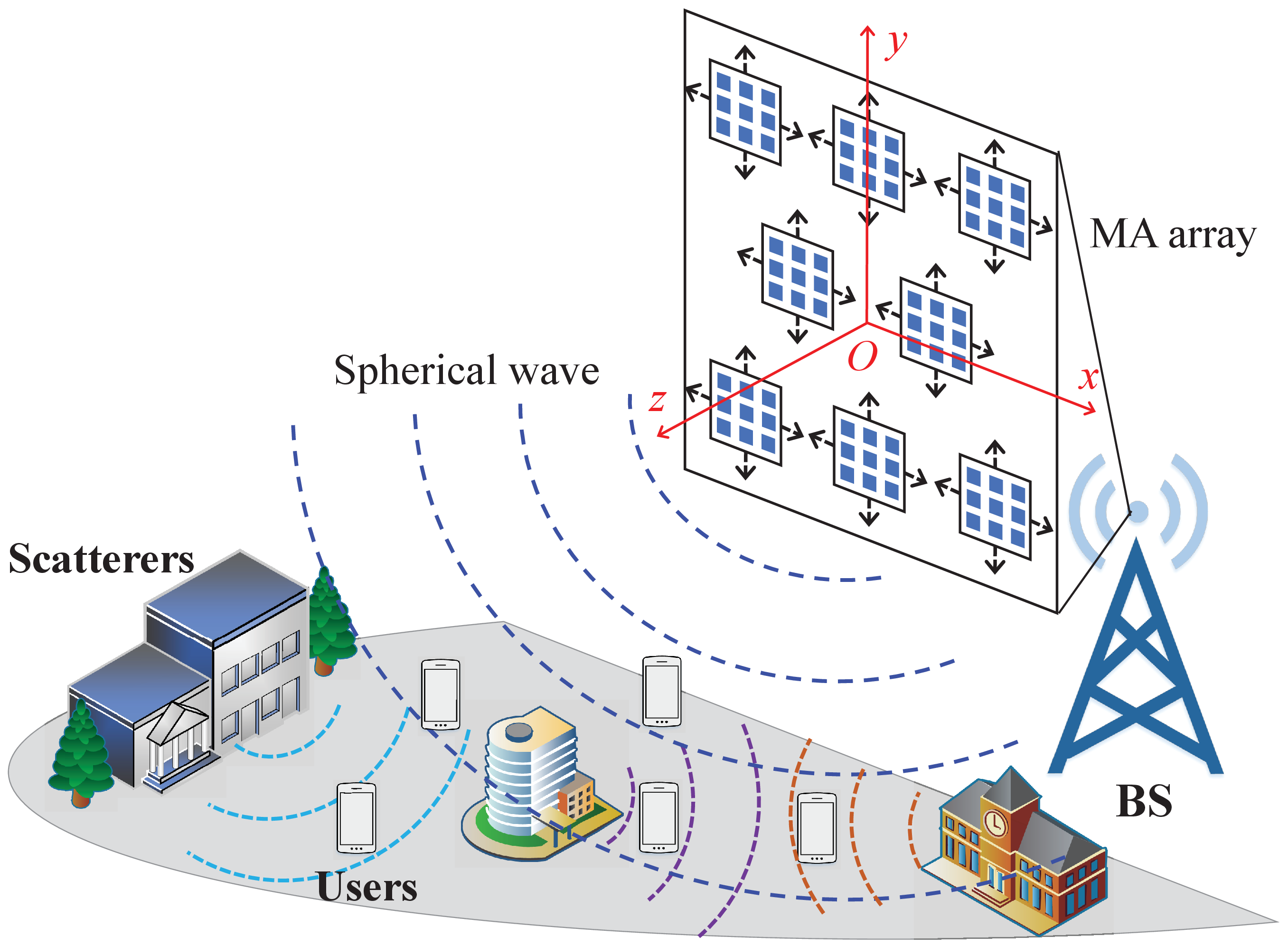}
		\caption{Illustration of the considered MA-enabled near-field communication system.}
		\label{fig:system}
	\end{center}
\end{figure}
As shown in Fig. \ref{fig:system}, the BS is equipped with a 2D MA array composed of $M$ movable subarrays. Without loss of generality, we assume that each MA subarray is a uniform planar array (UPA) consisting of $N=N_{x} \times N_{y}$ antennas, where $N_{x}$ and $N_{y}$ represent the number of antennas along horizontal and vertical directions, respectively. The number of users is denoted as $K$, with each user equipped with a single FPA. To depict the positions of MAs, we establish a 3D Cartesian coordinate system centered at the BS, where axes $x$ and $y$ are defined as the horizontal and vertical directions in the 2D MA array plane, respectively, while axis $z$ is perpendicular to the array plane. Denote $\mathbf{t}_{m} = [\tilde{\mathbf{t}}_{m}^{\mathrm{T}}, 0]^{\mathrm{T}}$, $1 \leq m \leq M$, and $\mathbf{q}_{n} = [\tilde{\mathbf{q}}_{n}^{\mathrm{T}}, 0]^{\mathrm{T}}$, $1 \leq n \leq N$, as the 3D position of the $m$-th MA subarray's center and the 3D position of the $n$-th antenna element relative to the subarray's center, respectively. 
Therein, $\tilde{\mathbf{t}}_{m}=[x_{m},y_{m}]^{\mathrm{T}}$ is the 2D position of the $m$-th MA subarray's center in the $x$-$O$-$y$ plane, which is confined in the antenna moving region, $\mathcal{C}=[-A/2,A/2]\times[-A/2,A/2]$. The constant vector $\tilde{\mathbf{q}}_{n}$ represents the 2D position of the $n$-th antenna relative to the subarray's center in the $x$-$O$-$y$ plane. Thus, the 3D position of the $n$-th antenna element in the $m$-th MA subarray is given by $\mathbf{t}_{m,n}=\mathbf{t}_{m} + \mathbf{q}_{n}$. 

The 3D coordinate of the $k$-th user is denoted as $\mathbf{s}_{k,0}$, $1 \leq k \leq K$. It is assumed that an LoS path exists between the BS and each user. The number of NLoS transmit paths from the BS to the $k$-th user is $L_{k}$, which corresponds to $L_{k}$ point scatterers around the BS. Specifically, the 3D coordinate of the $\ell$-th scatterer for user $k$'s channel is denoted as $\mathbf{s}_{k,\ell}$, $1 \leq \ell \leq L_{k}$. According to the spherical wave model, the near-field response vector (NFRV) for the channel from antenna position $\mathbf{t}_{m,n}$ at the BS to user $k$ is given by\footnote{For the case without LoS path, the NFRV reduces dimension to $L_{k}$, which characterizes the phase variation of all NLoS paths from the BS to each user.}
\begin{equation}\label{eq_FRV}
	\mathbf{g}_{k}(\mathbf{t}_{m,n})=\left[\mathrm{e}^{\mathrm{j}\frac{2\pi}{\lambda}\|\mathbf{t}_{m,n}-\mathbf{s}_{k,0}\|_{2}},\dots,
	\mathrm{e}^{\mathrm{j}\frac{2\pi}{\lambda}\|\mathbf{t}_{m,n}-\mathbf{s}_{k,L_{k}}\|_{2}}\right]^{\mathrm{T}},
\end{equation}
which characterizes the phase variation of all the channel paths between the BS and user $k$. Let $\mathbf{b}_{k} \in \mathbb{C}^{(L_{k}+1) \times 1}$ denote the path response vector (PRV) consisting of the complex coefficients of all the $(L_{k}+1)$ channel paths from the origin at the BS to user $k$. The channel gain from antenna position $\mathbf{t}_{m,n}$ at the BS to user $k$ is thus given by
\begin{equation}\label{eq_channel}
	h_{k}(\mathbf{t}_{m,n})=\mathbf{g}_{k}^{\mathrm{H}}(\mathbf{t}_{m,n})\mathbf{b}_{k}, 1 \leq k \leq K.
\end{equation}
Furthermore, we define $\tilde{\mathbf{t}} \triangleq [\tilde{\mathbf{t}}_{1}^{\mathrm{T}},\tilde{\mathbf{t}}_{2}^{\mathrm{T}},\dots, \tilde{\mathbf{t}}_{M}^{\mathrm{T}}]^{\mathrm{T}} \in \mathbb{R}^{2M \times 1}$ as the APV of all the $M$ MA subarrays. The near-field response matrix (NFRM) for user $k$ is then defined as $\mathbf{G}_{k}(\tilde{\mathbf{t}}) \triangleq [\mathbf{g}_{k}(\mathbf{t}_{1,1}), \dots, \mathbf{g}_{k}(\mathbf{t}_{1,N}), \dots, \mathbf{g}_{k}(\mathbf{t}_{M,1}), \dots, \mathbf{g}_{k}(\mathbf{t}_{M,N})] \in \mathbb{C}^{(L_{k}+1) \times MN}$. As such, the channel vector between the MA array of the BS and the FPA of user $k$ can be expressed as a function of the APV, i.e., 
\begin{equation}\label{eq_channel_vec}
	\mathbf{h}_{k}(\tilde{\mathbf{t}})=\mathbf{G}_{k}^{\mathrm{H}}(\tilde{\mathbf{t}})\mathbf{b}_{k}, 1 \leq k \leq K.
\end{equation}

Next, we investigate the MA-enabled near-field communication systems under the digital beamforming and analog beamforming architectures, respectively.

\section{MA Systems with Digital Beamforming}
\subsection{Problem Formulation}
For the considered MA-enabled near-field communication system under the digital beamforming architecture, multiple users are served by the BS via SDMA in the downlink to fully exploit the spatial multiplexing gain. Specifically, denoting $\mathbf{H}(\tilde{\mathbf{t}})=[\mathbf{h}_{1}(\tilde{\mathbf{t}}),\mathbf{h}_{2}(\tilde{\mathbf{t}}),\dots,\mathbf{h}_{K}(\tilde{\mathbf{t}})] \in \mathbb{C}^{MN \times K}$ as the channel matrix between the BS and all users, the received signals at the users can be expressed as
\begin{equation}\label{eq_signal_dig}
	\mathbf{y} = \mathbf{H}^{\mathrm{H}}(\tilde{\mathbf{t}})\mathbf{W}\mathbf{s} + \mathbf{n},
\end{equation}
where $\mathbf{W} = [\mathbf{w}_{1}, \mathbf{w}_{2}, \dots, \mathbf{w}_{K}] \in \mathbb{C}^{MN \times K}$ is the DBFM at the BS. $\mathbf{s} \in \mathbb{C}^{K \times 1}$ denotes the normalized CSCG signals satisfying $\mathbb{E}\{\mathbf{s}\mathbf{s}^{\mathrm{H}}\} = \mathbf{I}_{K}$. $\mathbf{n} \sim \mathcal{CN}(\mathbf{0}, \sigma^{2}\mathbf{I}_{K})$ represents the additive Gaussian noise at the users with average power $\sigma^{2}$.

Then, the receive SINR for user $k$ is given by
\begin{equation}\label{eq_SINR}
	\gamma_{k}(\tilde{\mathbf{t}},\mathbf{W}) = \frac{\left|\mathbf{h}_{k}^{\mathrm{H}}(\tilde{\mathbf{t}}) \mathbf{w}_{k}\right|^{2}}{\sum \limits _{j=1, j \neq k}^{K} \left|\mathbf{h}_{k}^{\mathrm{H}}(\tilde{\mathbf{t}}) \mathbf{w}_{j}\right|^{2} + \sigma^{2}}.
\end{equation}
To guarantee user fairness, we aim to maximize the minimum SINR among multiple users by jointly optimizing the APV and DBFM for the MA array, which can be formulated as the following optimization problem:
\begin{subequations}\label{eq_problem_DBF}
	\begin{align}
		&\mathop{\max}\limits_{\tilde{\mathbf{t}}, \mathbf{W}}~
		\min \limits_{1 \leq k \leq K} ~ \gamma_{k}(\tilde{\mathbf{t}},\mathbf{W}) \label{eq_problem_DBF_a}\\
		&~~\mathrm{s.t.}~~  \left\|\mathbf{W}\right\|_{\mathrm{F}}^{2} \leq P, \label{eq_problem_DBF_b}\\
		&~~~~~~~~~  \tilde{\mathbf{t}}_{m} \in \mathcal{C},~ 1 \leq m \leq M, \label{eq_problem_DBF_c}\\
		&~~~~~~~~  \left\|\tilde{\mathbf{t}}_{m}-\tilde{\mathbf{t}}_{\hat{m}}\right\|_{2} \geq d_{\min},~ 1 \leq m \neq \hat{m} \leq M, \label{eq_problem_DBF_d}
	\end{align}
\end{subequations}
where constraint \eqref{eq_problem_DBF_b} indicates that the total transmit power of the BS should not exceed the maximum value, $P$; constraint \eqref{eq_problem_DBF_c} confines the moving region of each MA subarray; and constraint \eqref{eq_problem_DBF_d} guarantees that the distance between any two MA subarrays should be no less than the minimum threshold, $d_{\min}$, which can avoid their coupling and/or overlap. Since problem \eqref{eq_problem_DBF} is a non-convex optimization problem with coupled variables, it is difficult to obtain globally optimal solutions for it in a polynomial time. Thus, we first analyze the upper bound on the objective function of problem \eqref{eq_problem_DBF} and illustrate the condition to achieve the performance bound under the special case of a single (LoS/NLoS) path for each user. Then, we develop a low-complexity algorithm to obtain suboptimal solutions for problem \eqref{eq_problem_DBF}.

\subsection{Performance Analysis}
If the received signal power for each user is equal to its maximum value and the interference between any two users is equal to zero, the SINR for each user in \eqref{eq_SINR} achieves an upper bound. Specifically, the maximum received signal power for each user can be achieved by maximal ratio transmitting (MRT), i.e., $\mathbf{w}_{k}=\sqrt{p}_{k}\mathbf{h}_{k}(\tilde{\mathbf{t}})/\|\mathbf{h}_{k}(\tilde{\mathbf{t}})\|_{2}$, where $p_{k}$ represents the transmit power allocated to the signal for user $k$. Then, the SINR in \eqref{eq_SINR} is upper-bounded by the SNR as
\begin{equation}\label{eq_SINR_bound}
	\gamma_{k}(\tilde{\mathbf{t}},\mathbf{W}) \leq \frac{\left|\mathbf{h}_{k}^{\mathrm{H}}(\tilde{\mathbf{t}}) \mathbf{w}_{k}\right|^{2}}{\sigma^{2}} = \frac{\left\|\mathbf{h}_{k}(\tilde{\mathbf{t}})\right\|_{2}^{2} p_{k}}{\sigma^{2}},~1 \leq k \leq K,
\end{equation}
where the equality holds if the channel vectors for all users are orthogonal to each other, i.e., $\mathbf{h}_{k}^{\mathrm{H}}(\tilde{\mathbf{t}})\mathbf{h}_{\hat{k}}(\tilde{\mathbf{t}}) = 0$, $k \neq \hat{k}$. Moreover, the channel power gain for user $k$ is upper-bounded by
\begin{equation}\label{eq_channel_power}
	\left\|\mathbf{h}_{k}(\tilde{\mathbf{t}})\right\|_{2}^{2} = \sum \limits_{m=1}^{M} \sum \limits_{n=1}^{N} \left|\mathbf{g}_{k}^{\mathrm{H}}(\mathbf{t}_{m,n})\mathbf{b}_{k}\right|^{2} \leq MN\left\|\mathbf{b}_{k}\right\|_{1}^{2},
\end{equation}
where the equality holds if the phase of the NFRV aligns with that of the PRV, i.e., $\mathbf{g}_{k}(\mathbf{t}_{m,n}) = \mathrm{e}^{\mathrm{j} \angle \mathbf{b}_{k}}$.

Substituting \eqref{eq_channel_power} into \eqref{eq_SINR_bound}, we obtain an upper bound on the SINR for user $k$ as
\begin{equation}\label{eq_SINR_bound2}
	\gamma_{k}(\tilde{\mathbf{t}},\mathbf{W}) \leq \frac{MN\left\|\mathbf{b}_{k}\right\|_{1}^{2} p_{k}}{\sigma^{2}},~1 \leq k \leq K.
\end{equation}
To maximize the minimum SINR among all the $K$ users, the transmit power should be appropriately allocated such that all the users have the same SINR. Otherwise, we can always reduce the power for the user with higher SINR and reallocate it to the users with lower SINR. Thus, the optimal power allocation for maximizing the minimum SINR can be obtained by solving the following equations:
\begin{equation}\label{eq_power_allo_equ}
	\left\{\begin{aligned}
		&\frac{MN\left\|\mathbf{b}_{k}\right\|_{1}^{2} p_{k}}{\sigma^{2}} = \bar{\gamma}, ~1 \leq k \leq K,\\
		&\sum \limits_{k=1}^{K} p_{k} = P,
	\end{aligned}
	\right.
\end{equation}
where $\bar{\gamma}$ denotes the upper bound on the minimum SINR in \eqref{eq_problem_DBF_a} and it can be derived in closed form as
\begin{equation}\label{eq_minSINR_bound}
	\bar{\gamma} = \frac{P}{\sigma^{2}}\left(\sum \limits_{k=1}^{K} \frac{1}{MN\left\|\mathbf{b}_{k}\right\|_{1}^{2}}\right)^{-1}.
\end{equation}

It is worth noting that the upper bound in \eqref{eq_minSINR_bound} is not always achievable because the optimization of the APV, $\tilde{\mathbf{t}}$, cannot always guarantee the orthogonality between channel vectors and the phase alignment between the NFRV and PRV for all users. Nevertheless, if all the users only have a single (LoS/NLoS) channel path and the size of each subarray is $N=1$, we can derive a sufficient and necessary condition for the APV to achieve the SINR upper bound. For notation simplicity, denote $\mathbf{s}_{k}$ and $b_{k}$ as the 3D position of the user/scatterer corresponding to the single LoS/NLoS path and its coefficient, respectively. Then, we have the following theorem. 
\begin{theorem}\label{Theo_dig}
	Under the condition of a single channel path for each user and $N=1$, the upper bound on the minimum SINR in \eqref{eq_minSINR_bound} is achieved if and only if the 3D position of each MA subarray belongs to the following set: 
	\begin{equation}\label{eq_opt_APV_dig}
		\begin{aligned}
			\mathbf{t}_{m} \in &\Big{\{} \mathbf{t} \big{\arrowvert} \left\|\mathbf{t}-\mathbf{s}_{k}\right\|_{2} -\left\|\mathbf{t}-\mathbf{s}_{\hat{k}}\right\|_{2} = \lambda(n_{k,\hat{k},m}+\phi_{k,\hat{k},m}),\\
			&\forall k \neq \hat{k}, \forall n_{k,\hat{k},m} \in \mathbb{Z}, 0 \leq \phi_{k,\hat{k},m} < 1 \Big{\}} \triangleq \mathcal{T}_{m},
		\end{aligned}
	\end{equation}
	with $\sum_{m=1}^{M} e^{\mathrm{j}2\pi\phi_{k,\hat{k},m}}=0$.
\end{theorem}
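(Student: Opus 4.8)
The plan is to reduce the achievability of the bound $\bar\gamma$ in \eqref{eq_minSINR_bound} to a pure channel-orthogonality condition on $\tilde{\mathbf{t}}$, and then to convert that condition into the geometric description of the sets $\mathcal{T}_m$. The key simplification provided by the hypotheses is that, with $N=1$ and a single path, the channel from the $m$-th (now single-antenna) subarray to user $k$ is the scalar $h_k(\mathbf{t}_m)=b_k\,\mathrm{e}^{-\mathrm{j}\frac{2\pi}{\lambda}\|\mathbf{t}_m-\mathbf{s}_k\|_2}$, so $|h_k(\mathbf{t}_m)|=|b_k|$ and hence $\|\mathbf{h}_k(\tilde{\mathbf{t}})\|_2^2=M|b_k|^2$ for \emph{every} $\tilde{\mathbf{t}}$; that is, the channel-power inequality \eqref{eq_channel_power} (with $\|\mathbf{b}_k\|_1=|b_k|$) is automatically an equality. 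Therefore, retracing the argument that leads from \eqref{eq_SINR_bound} to \eqref{eq_minSINR_bound}, $\bar\gamma$ is attained by some $(\tilde{\mathbf{t}},\mathbf{W})$ if and only if, for that $\tilde{\mathbf{t}}$, there is a DBFM $\mathbf{W}$ with $\|\mathbf{W}\|_{\mathrm{F}}^2\le P$ making $\gamma_k(\tilde{\mathbf{t}},\mathbf{W})=\bar\gamma$ for all $k$. (We assume $b_k\ne 0$ for every $k$; otherwise $\bar\gamma=0$ and the statement is vacuous.)

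Next I would show that such a $\mathbf{W}$ exists if and only if the channel vectors $\mathbf{h}_1(\tilde{\mathbf{t}}),\dots,\mathbf{h}_K(\tilde{\mathbf{t}})$ are mutually orthogonal. For sufficiency, take $\mathbf{w}_k=\sqrt{p_k}\,\mathbf{h}_k(\tilde{\mathbf{t}})/\|\mathbf{h}_k(\tilde{\mathbf{t}})\|_2$ with $p_k=\bar\gamma\sigma^2/(M|b_k|^2)$: the interference terms in \eqref{eq_SINR} vanish by orthogonality, $\sum_k p_k=P$ by \eqref{eq_power_allo_equ}, and each $\gamma_k$ equals $\bar\gamma$. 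For necessity, suppose $\min_k\gamma_k(\tilde{\mathbf{t}},\mathbf{W})=\bar\gamma$, so $\gamma_k\ge\bar\gamma$ for all $k$. Dropping the interference terms and applying the Cauchy--Schwarz inequality gives $\bar\gamma\le\gamma_k\le|\mathbf{h}_k^{\mathrm{H}}\mathbf{w}_k|^2/\sigma^2\le M|b_k|^2\|\mathbf{w}_k\|_2^2/\sigma^2$, hence $\|\mathbf{w}_k\|_2^2\ge\bar\gamma\sigma^2/(M|b_k|^2)$; summing over $k$ with $\sum_k\|\mathbf{w}_k\|_2^2\le P$ reproduces \eqref{eq_minSINR_bound} and therefore forces equality throughout. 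Tightness of Cauchy--Schwarz gives $\mathbf{w}_k=c_k\mathbf{h}_k(\tilde{\mathbf{t}})$ with $c_k\ne 0$ (otherwise $\mathbf{w}_k=\mathbf{0}$ and $\gamma_k=0$), while the vanishing of the interference gives $\mathbf{h}_k^{\mathrm{H}}(\tilde{\mathbf{t}})\mathbf{w}_j=0$ for $j\ne k$; combining these yields $\mathbf{h}_k^{\mathrm{H}}(\tilde{\mathbf{t}})\mathbf{h}_{\hat k}(\tilde{\mathbf{t}})=0$ for all $k\ne\hat k$.

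Finally, I would translate the orthogonality condition into geometry. From the scalar form of $h_k(\mathbf{t}_m)$,
\[
\mathbf{h}_k^{\mathrm{H}}(\tilde{\mathbf{t}})\mathbf{h}_{\hat k}(\tilde{\mathbf{t}})
= b_k^{*}b_{\hat k}\sum_{m=1}^{M}\mathrm{e}^{\mathrm{j}\frac{2\pi}{\lambda}\left(\|\mathbf{t}_m-\mathbf{s}_k\|_2-\|\mathbf{t}_m-\mathbf{s}_{\hat k}\|_2\right)},
\]
and since $b_k,b_{\hat k}\ne 0$ this vanishes exactly when $\sum_{m=1}^{M}\mathrm{e}^{\mathrm{j}\frac{2\pi}{\lambda}(\|\mathbf{t}_m-\mathbf{s}_k\|_2-\|\mathbf{t}_m-\mathbf{s}_{\hat k}\|_2)}=0$. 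Writing $\frac{1}{\lambda}\big(\|\mathbf{t}_m-\mathbf{s}_k\|_2-\|\mathbf{t}_m-\mathbf{s}_{\hat k}\|_2\big)=n_{k,\hat k,m}+\phi_{k,\hat k,m}$ with $n_{k,\hat k,m}\in\mathbb{Z}$ and $\phi_{k,\hat k,m}\in[0,1)$, the $m$-th phasor equals $\mathrm{e}^{\mathrm{j}2\pi\phi_{k,\hat k,m}}$, so the orthogonality condition for the pair $(k,\hat k)$ is precisely that $\mathbf{t}_m$ satisfies the distance-difference equation defining $\mathcal{T}_m$ and that $\sum_{m=1}^{M}\mathrm{e}^{\mathrm{j}2\pi\phi_{k,\hat k,m}}=0$. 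Imposing this for all pairs $k\ne\hat k$ is exactly the characterization \eqref{eq_opt_APV_dig}, which completes the argument.

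The routine parts are the scalar channel algebra and the integer/fractional-part bookkeeping; the step needing the most care is the necessity direction, where one must verify that equality throughout the Cauchy--Schwarz/zero-interference/power-budget chain is \emph{forced} rather than merely sufficient, and in particular that $c_k\ne 0$, so that the cross-terms pin down $\mathbf{h}_k^{\mathrm{H}}\mathbf{h}_{\hat k}=0$ and not just $\mathbf{h}_k^{\mathrm{H}}\mathbf{w}_{\hat k}=0$. It is also worth emphasizing that the two hypotheses ($N=1$ and a single path) are exactly what make \eqref{eq_channel_power} automatically tight, which is why an exact necessary-and-sufficient condition is available here, in contrast to the general case.
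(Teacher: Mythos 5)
Your proposal is correct and follows essentially the same route as the paper's proof in Appendix A: observe that with $N=1$ and a single path the per-element channel magnitude is constant so the power inequality \eqref{eq_channel_power} is automatically tight, reduce achievability of $\bar{\gamma}$ to pairwise orthogonality of the channel vectors, and rewrite $\sum_{m=1}^{M}\mathrm{e}^{\mathrm{j}\frac{2\pi}{\lambda}(\|\mathbf{t}_m-\mathbf{s}_k\|_2-\|\mathbf{t}_m-\mathbf{s}_{\hat k}\|_2)}=0$ via the integer/fractional decomposition to obtain $\mathcal{T}_m$ with $\sum_{m}\mathrm{e}^{\mathrm{j}2\pi\phi_{k,\hat k,m}}=0$. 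Your treatment of the necessity direction is in fact more complete than the paper's, which simply asserts that orthogonality is the necessary and sufficient condition for equality in \eqref{eq_SINR_bound}; your Cauchy--Schwarz/power-budget chain showing that \emph{any} feasible $\mathbf{W}$ attaining $\bar{\gamma}$ must be MRT-aligned with nonzero $c_k$ and zero interference is exactly the detail needed to pin down $\mathbf{h}_k^{\mathrm{H}}\mathbf{h}_{\hat k}=0$ rather than merely $\mathbf{h}_k^{\mathrm{H}}\mathbf{w}_{\hat k}=0$.
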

\begin{proof}
	See Appendix \ref{Appendix_dig}.
\end{proof}
Theorem \ref{Theo_dig} specifies the structure of the APV solution for achieving the SINR upper bound under the condition of a single channel path for each user, where $\mathcal{T}_{m}$ is composed of the intersections of a series of hyperbolas determined by parameters $n_{k,\hat{k},m}$ and $\phi_{k,\hat{k},m}$. It can be used to obtain an optimal APV by checking the intersection of $\{\mathcal{T}_{m}\}_{m=1}^{M}$ and constraints \eqref{eq_problem_DBF_c} and \eqref{eq_problem_DBF_d}. For example, for the case of $K=2$, set $\mathcal{T}_{m}$ is simplified as
	\begin{equation}
	\begin{aligned}
		\mathcal{T}_{m}= &\Big{\{} \mathbf{t} \big{\arrowvert} \left\|\mathbf{t}-\mathbf{s}_{1}\right\|_{2} -\left\|\mathbf{t}-\mathbf{s}_{2}\right\|_{2} = \lambda(n_{1,2,m}+\phi_{1,2,m}),\\
		&~~~~~~~~~~~~~~~~~~~~~~~\forall n_{1,2,m} \in \mathbb{Z}, 0 \leq \phi_{1,2,m} < 1 \Big{\}},
	\end{aligned}
\end{equation}
which is a series of hyperbolas w.r.t. focuses $\mathbf{s}_{1}$ and $\mathbf{s}_{2}$. Consider a simple setting of $\phi_{1,2,m} = m/M$, $m=1,\dots,M$, satisfying $\sum_{m=1}^{M} e^{\mathrm{j}2\pi\phi_{1,2,m}}=0$. Then, we only need to select a point $\mathbf{t}_{m}$ in $\mathcal{T}_{m} \cap \mathcal{C}$ satisfying the inter-subarray distance constraint \eqref{eq_problem_DBF_d}, which yields an optimal solution for the APV. However, as the number of users increases, the computational complexity for finding the optimal APV exponentially increases with $K$ because we need to search $n_{k,\hat{k},m}$ and $\phi_{k,\hat{k},m}$ for all users. Besides, for certain channel conditions, $\mathcal{T}_{m}$ may be an empty set and thus the upper bound in \eqref{eq_minSINR_bound} is not achievable. Moreover, for the general case of multiple channel paths for each user or multiple antennas in each subarray, it is difficult to derive an explicit solution for the APV. Due to these reasons, we develop a general algorithm with a low complexity to numerically find suboptimal solutions for the APV and DBFM in the next subsection. 

\subsection{Optimization Algorithm}
Due to the sufficiently large size of the antenna moving region, the correlation of channel vectors for multiple users can be significantly reduced by antenna position optimization. Thus, we adopt the ZF method for designing the DBFM. Specifically, the ZF-based DBFM for maximizing the minimum SINR among multiple users is given by
\begin{equation}\label{eq_ZF}
	\mathbf{W}_{\mathrm{ZF}}(\tilde{\mathbf{t}})= \sqrt{P}\frac{\left(\mathbf{H}(\tilde{\mathbf{t}})^{\mathrm{H}}\right)^{\dagger}}{\left\|\left(\mathbf{H}(\tilde{\mathbf{t}})^{\mathrm{H}}\right)^{\dagger}\right\|_{\mathrm{F}}} = \frac{\mathbf{H}(\tilde{\mathbf{t}})\left(\mathbf{H}^{\mathrm{H}}(\tilde{\mathbf{t}})\mathbf{H}(\tilde{\mathbf{t}})\right)^{-1}}{\left\| \mathbf{H}(\tilde{\mathbf{t}})\left(\mathbf{H}^{\mathrm{H}}(\tilde{\mathbf{t}})\mathbf{H}(\tilde{\mathbf{t}})\right)^{-1} \right\|_{\mathrm{F}}},
\end{equation}
which yields the same SINR for all $K$ users as 
\begin{equation}\label{eq_SINR_min}
	\gamma(\tilde{\mathbf{t}}) = \frac{P}{\left\| \mathbf{H}(\tilde{\mathbf{t}})\left(\mathbf{H}^{\mathrm{H}}(\tilde{\mathbf{t}})\mathbf{H}(\tilde{\mathbf{t}})\right)^{-1} \right\|_{\mathrm{F}}^{2} \sigma^{2}} = \frac{P}{\mathrm{tr}\{\mathbf{Z}(\tilde{\mathbf{t}})^{-1}\} \sigma^{2}},
\end{equation}
with $\mathbf{Z}(\tilde{\mathbf{t}}) \triangleq \mathbf{H}^{\mathrm{H}}(\tilde{\mathbf{t}})\mathbf{H}(\tilde{\mathbf{t}})$. Then, the maximization of the minimum SINR is simplified as the following problem:
\begin{equation}\label{eq_problem_DBF2}
	\begin{aligned}
		&\mathop{\max}\limits_{\tilde{\mathbf{t}}}~\gamma(\tilde{\mathbf{t}}) \\
		&~~\mathrm{s.t.}~~  \eqref{eq_problem_DBF_c},~\eqref{eq_problem_DBF_d}.
	\end{aligned}
\end{equation}
Note that problem \eqref{eq_problem_DBF2} is not equivalent to problem \eqref{eq_problem_DBF} because the ZF beamforming is not optimal. Nevertheless, we will show in the simulation that the ZF-based solution can approach the performance upper bound given by \eqref{eq_SINR_bound} because the antenna position optimization in a sufficiently large region can significantly reduce the correlation of channel vectors for multiple users.

To solve the non-convex problem \eqref{eq_problem_DBF2} efficiently, we adopt the projected gradient ascent method. Specifically, denote the APV obtained in the $(i-1)$-th iteration as $\tilde{\mathbf{t}}^{(i-1)}$ and the gradient of $\gamma(\tilde{\mathbf{t}})$ w.r.t. $\tilde{\mathbf{t}}$ as $\nabla_{\tilde{\mathbf{t}}} \gamma(\tilde{\mathbf{t}})$, which can be derived in closed form shown in Appendix B. For the $i$-th iteration, the APV is updated as
\begin{equation}\label{eq_APV_update}
	\tilde{\mathbf{t}}^{(i)} = \mathcal{B} \left\{ \tilde{\mathbf{t}}^{(i-1)} + \tau^{(i)} \nabla_{\tilde{\mathbf{t}}} \gamma(\tilde{\mathbf{t}}^{(i-1)}) \right\},
\end{equation}
where $\mathcal{B}\{\cdot\}$ is a projection function. If an element in $\tilde{\mathbf{t}}$ exceeds the feasible region, it is then projected to the nearest boundary of its feasible region \cite{zhu2023MAmultiuser}. $\tau^{(i)}$ is the step size, which is obtained by backtracking line search. Specifically, for each iteration, the step size is initialized as a large positive value, $\tau^{(i)} = \tau$. Then, we repeatedly shrink the step size by a factor $\mu \in (0,1)$, i.e., $\tau^{(i)} \leftarrow \mu \tau^{(i)}$, until constraint \eqref{eq_problem_DBF_d} and the Armijo–Goldstein condition are both satisfied \cite{boyd2004convex}, i.e.,
\begin{subequations}\label{eq_Armijo–Goldstein}
	\begin{align}
		&\gamma(\tilde{\mathbf{t}}^{(i)}) \geq \gamma(\tilde{\mathbf{t}}^{(i-1)}) + \xi \tau^{(i-1)} \left\|\nabla_{\tilde{\mathbf{t}}} \gamma(\tilde{\mathbf{t}}^{(i-1)})\right\|_{2}^{2},\\
		&\left\|\tilde{\mathbf{t}}_{m}^{(i)}-\tilde{\mathbf{t}}_{\hat{m}}^{(i)}\right\|_{2} \geq d_{\min},~ 1 \leq m \neq \hat{m} \leq M,
	\end{align}
\end{subequations}
where $\xi \in (0,1)$ is a predefined parameter to control the incremental speed of the objective function. 

The overall algorithm for solving problem \eqref{eq_problem_DBF} is summarized in Algorithm \ref{alg_DBF}. In line 1, the APV is initialized by letting the subarrays uniformly distributed within the entire antenna moving region. For each iteration, the backtracking line search for the step size is operated in lines 7-10. The algorithm terminates if the increment of the objective function is below a predefined threshold, $\epsilon_{1}$, or it reaches the maximum number of iterations, $I_{1}$. Since the objective function is non-decreasing over the iterations and it is upper-bounded, the convergence of Algorithm \ref{alg_DBF} is guaranteed. The computational complexity of calculating the gradient in line 4 is $\mathcal{O}(MNL^{2}K^{2})$, with $L=\max \limits_{1 \leq k \leq K}\{L_{k}\}$. The complexity of calculating the objective function is $\mathcal{O}(MNLK+MNK^2)$, which entails the complexity of $\mathcal{O}(J_{1}MNK(L+K))$ in lines 7-10, with $J_{1}$ being the maximum number of iterations for backtracking line search. Besides, the complexity of calculating the ZF-based DBFM in line 16 is $\mathcal{O}(MNK^2)$. Thus, the total computational complexity of Algorithm \ref{alg_DBF} is $\mathcal{O}(I_{1}MNK(J_{1}L+J_{1}K+L^{2}K))$.

\begin{algorithm}[t]\small
	\caption{ZF-based algorithm for problem \eqref{eq_problem_DBF}.}
	\label{alg_DBF}
	\begin{algorithmic}[1]
		\REQUIRE ~$M$, $N_{x}$, $N_{y}$, $K$, $P$, $\sigma^{2}$, $\lambda$, $A$, $d_{\min}$, $\{\mathbf{q}_{n}\}$, $\{L_{k}\}$,\\ ~~~~~~$\{\mathbf{s}_{k,\ell}\}$, $\{\mathbf{b}_{k}\}$, $\tau$, $\xi$, $\mu$, $\epsilon_{1}$, $I_{1}$.
		\ENSURE ~$\tilde{\mathbf{t}}^{\star}$ and $\mathbf{W}^{\star}$. \\
		\STATE Initialize the APV $\tilde{\mathbf{t}}^{(0)}$.
		\STATE Obtain $\gamma(\tilde{\mathbf{t}}^{(0)})$ according to \eqref{eq_SINR_min}.
		\FOR   {$i=1:I_{1}$}
		\STATE Calculate the gradient according to \eqref{eq_SINR_grad_x} and \eqref{eq_SINR_grad_y}.
		\STATE Initialize step size $\tau^{(i)}=\tau$.
		\STATE Update the APV $\tilde{\mathbf{t}}^{(i)}$ according to \eqref{eq_APV_update}.
		\WHILE {\eqref{eq_Armijo–Goldstein} is not satisfied}
		\STATE Shrink the step size $\tau^{(i)} \leftarrow \mu \tau^{(i)}$.
		\STATE Update the APV $\tilde{\mathbf{t}}^{(i)}$ according to \eqref{eq_APV_update}.
		\ENDWHILE
		\IF    {$\gamma(\tilde{\mathbf{t}}^{(i)})-\gamma(\tilde{\mathbf{t}}^{(i-1)})<\epsilon_{1}$}
		\STATE Break.
		\ENDIF
		\ENDFOR
		\STATE Set the APV as $\tilde{\mathbf{t}}^{\star}=\tilde{\mathbf{t}}^{(i)}$.
		\STATE Calculate the DBFM as $\mathbf{W}^{\star}=\mathbf{W}_{\mathrm{ZF}}(\tilde{\mathbf{t}}^{\star})$ according to \eqref{eq_ZF}.
		\RETURN $\tilde{\mathbf{t}}^{\star}$ and $\mathbf{W}^{\star}$.
	\end{algorithmic}
\end{algorithm}

\subsection{Statistical CSI Based Antenna Position Optimization}
Note that the joint optimization of the APV and DBFM in problem \eqref{eq_problem_DBF} requires the antenna movement based on instantaneous CSI. In practice, due to the mobility of users and scatterers, the wireless channels between the BS and users usually exhibit variations over time. However, the speed of antenna movement is limited, which may hinder the efficiency of MA arrays under time-varying or fast fading channels. To address this issue, we propose a two-timescale optimization scheme for MA arrays, where the APV is designed based on the statistical CSI over a long time period, while the DBFM is optimized based on instantaneous CSI. To this end, we formulate a two-timescale optimization problem for maximizing the ergodic SINR of $K$ users as follows:
\begin{subequations}\label{eq_problem_DBF_ergodic}
	\begin{align}
		&\mathop{\max}\limits_{\tilde{\mathbf{t}}}~ \mathbb{E}\left\{\mathop{\max}\limits_{\mathbf{W}}~
		\min \limits_{1 \leq k \leq K} ~ \gamma_{k}(\tilde{\mathbf{t}},\mathbf{W})\right\} \label{eq_problem_DBF_ergodic_a}\\
		&~~\mathrm{s.t.}~~  \eqref{eq_problem_DBF_b},~\eqref{eq_problem_DBF_c},~\eqref{eq_problem_DBF_d},
	\end{align}
\end{subequations}
where the expectation is operated over the random instantaneous channels\footnote{Problem \eqref{eq_problem_DBF_ergodic} represents a general formation that is applicable to the design of MA array geometry under any given statistical CSI. The modeling and acquisition of statistical CSI between the BS and users are beyond the scope of this paper and will be left for future work.}. Since it is difficult to derive the expected SINR in closed form, to facilitate the optimization, we approximate the expectation in \eqref{eq_problem_DBF_ergodic_a} by Monte Carlo simulations as
\begin{equation}\label{eq_SINR_ergodic}
	\mathbb{E}\left\{\mathop{\max}\limits_{\mathbf{W}}
	\min \limits_{1 \leq k \leq K} \gamma_{k}(\tilde{\mathbf{t}},\mathbf{W})\right\} \approx \frac{1}{Q_{1}} \sum \limits_{q=1}^{Q_{1}} \mathop{\max}\limits_{\mathbf{W}}
	\min \limits_{1 \leq k \leq K} \gamma_{k}^{q}(\tilde{\mathbf{t}},\mathbf{W}),
\end{equation}
where $Q_{1}$ is the total number of Monte Carlo simulations and $\gamma_{k}^{q}$ denotes the SINR for user $k$ under the $q$-th, $1 \leq q \leq Q_{1}$, independent channel realization based on the known statistical CSI (e.g., user/channel distribution). Note that for each channel realization, we can also adopt the ZF-based DBFM given by \eqref{eq_ZF}. Then, the APV can be optimized by projected gradient ascent similar to that in Algorithm \ref{alg_DBF}, whereas the gradient should be averaged over all the $Q_{1}$ independent channel realizations.

\section{MA Systems With Analog Beamforming}
\subsection{Problem Formulation}
For MA-aided analog beamforming systems, SDMA cannot be utilized in multiuser communications because a single RF chain can only support one spatial data stream. Nevertheless, multiple users can be served by the BS via OFDMA, where the MA array is able to increase the beam coverage performance. Denoting $\mathbf{w} \in \mathbb{C}^{MN \times 1}$ as the ABFV at the BS, the received signal at user $k$ can be expressed as
\begin{equation}\label{eq_signal_ana}
	y_{k} = \mathbf{h}_{k}^{\mathrm{H}}(\tilde{\mathbf{t}})\mathbf{w}\sqrt{p_{k}}s_{k} + n_{k},~1 \leq k \leq K,
\end{equation}
where $s_{k}$ denotes the normalized CSCG signal for user $k$ and $p_{k}$ is the transmit power. $n_{k}$ represents the additive Gaussian noise at the user with average power $\sigma^{2}$. The receive SNR for user $k$ is thus given by
\begin{equation}\label{eq_SNR}
	\eta_{k}(\tilde{\mathbf{t}},\mathbf{w}, \mathbf{p}) = \frac{\left|\mathbf{h}_{k}^{\mathrm{H}}(\tilde{\mathbf{t}}) \mathbf{w}\right|^{2} p_{k}} {\sigma^{2}},~1 \leq k \leq K,
\end{equation}
where $\mathbf{p} \triangleq [p_{1}, p_{2}, \dots, p_{K}]^{\mathrm{T}}$ is the power allocation vector.

To guarantee user fairness, we maximize the minimum SNR among multiple users by jointly optimizing the APV, ABFV, and power allocation at the BS, which can be formulated as the following optimization problem:
\begin{subequations}\label{eq_problem_ABF}
	\begin{align}
		&\mathop{\max}\limits_{\tilde{\mathbf{t}}, \mathbf{w}, \mathbf{p}}~
		\min \limits_{1 \leq k \leq K} ~ \eta_{k}(\tilde{\mathbf{t}},\mathbf{w}, \mathbf{p}) \label{eq_problem_ABF_a}\\
		&~~\mathrm{s.t.}~~  \left|\left[\mathbf{w}\right]_{n}\right| = \frac{1}{\sqrt{MN}}, ~ 1 \leq n \leq MN, \label{eq_problem_ABF_b}\\
		&~~~~~~~~   \left\|\mathbf{p}\right\|_{1} \leq P, \label{eq_problem_ABF_c}\\
		&~~~~~~~~~  \tilde{\mathbf{t}}_{m} \in \mathcal{C},~ 1 \leq m \leq M, \label{eq_problem_ABF_d}\\
		&~~~~~~~~  \left\|\tilde{\mathbf{t}}_{m}-\tilde{\mathbf{t}}_{\hat{m}}\right\|_{2} \geq d_{\min},~ 1 \leq m \neq \hat{m} \leq M, \label{eq_problem_ABF_e}
	\end{align}
\end{subequations}
where constraint \eqref{eq_problem_ABF_b} is the constant-modulus constraint on the ABFV; constraint \eqref{eq_problem_ABF_c} indicates that the total transmit power of the BS should be no large than $P$; and constraints \eqref{eq_problem_ABF_d} and \eqref{eq_problem_ABF_e} confine the positions of MA subarrays. Problem \eqref{eq_problem_ABF} is also non-convex with coupled variables. Similar to Section III, we first analyze the upper bound on the objective function of problem \eqref{eq_problem_ABF} and then develop a low-complexity algorithm to obtain suboptimal solutions.

\subsection{Performance Analysis}
For any given power allocation, the receive SNR for each user is maximized if the phase of the ABFV is aligned with that of the channel vector, i.e., $\mathbf{w} = \mathrm{e}^{\mathrm{j}\angle \mathbf{h}_{k}(\tilde{\mathbf{t}})}/\sqrt{MN}$, which yields an upper bound on the SNR for user $k$ as 
\begin{equation}\label{eq_SNR_bound}
	\eta_{k}(\tilde{\mathbf{t}},\mathbf{w}, \mathbf{p}) \leq \frac{\left\|\mathbf{h}_{k}(\tilde{\mathbf{t}})\right\|_{1}^{2} p_{k}}{MN\sigma^{2}},~1 \leq k \leq K.
\end{equation}
Moreover, the squared 1-norm of the channel vector is upper-bounded by
\begin{equation}\label{eq_channel_power_one}
	\left\|\mathbf{h}_{k}(\tilde{\mathbf{t}})\right\|_{1}^{2} = \left(\sum \limits_{m=1}^{M} \sum \limits_{n=1}^{N} \left|\mathbf{g}_{k}(\mathbf{t}_{m,n})^{\mathrm{H}}\mathbf{b}_{k}\right|\right)^{2} \leq M^{2}N^{2}\left\|\mathbf{b}_{k}\right\|_{1}^{2},
\end{equation}
where the equality holds if the phase of the NFRV aligns with that of the PRV, i.e., $\mathbf{g}_{k}(\mathbf{t}_{m,n}) = \mathrm{e}^{\mathrm{j} \angle \mathbf{b}_{k}}$.

Substituting \eqref{eq_channel_power_one} into \eqref{eq_SNR_bound}, we obtain the upper bound on the SNR for user $k$ as
\begin{equation}\label{eq_SNR_bound2}
	\eta_{k}(\tilde{\mathbf{t}},\mathbf{w},\mathbf{p}) \leq \frac{MN\left\|\mathbf{b}_{k}\right\|_{1}^{2} p_{k}}{\sigma^{2}},~1 \leq k \leq K,
\end{equation}
which is the same as the upper bound on SINR shown in \eqref{eq_SINR_bound2}. Thus, the optimal power allocation for maximizing the minimum SNR can be obtained by solving the equations similar to \eqref{eq_power_allo_equ}. The corresponding upper bound on the minimum SNR is obtained as\footnote{Although the upper bound on the minimum SNR in \eqref{eq_minSNR_bound} is the same as that on the minimum SINR in \eqref{eq_minSINR_bound}, the digital and analog beamforming systems have totally different conditions for achieving the bound. Moreover, the analog beamforming system requires $K$ orthogonal frequency subcarriers to serve multiple users, while the digital beamforming system transmits signals over the same time-frequency resource block.}
\begin{equation}\label{eq_minSNR_bound}
	\bar{\eta} = \frac{P}{\sigma^{2}}\left(\sum \limits_{k=1}^{K} \frac{1}{MN\left\|\mathbf{b}_{k}\right\|_{1}^{2}}\right)^{-1}.
\end{equation}

Next, we provide a sufficient and necessary condition for the APV to achieve the upper bound under the special case that all the users have a single (LoS/NLoS) channel path and the size of each subarray is $N=1$, which is demonstrated in the following theorem. 
\begin{theorem}\label{Theo_ana}
	Under the condition of a single channel path for each user and $N=1$, the upper bound on the minimum SNR in \eqref{eq_minSNR_bound} is achieved if and only if the 3D position of each MA subarray belongs to the following set: 
	\begin{equation}\label{eq_opt_APV_ana}
		\begin{aligned}
			\mathbf{t}_{m} \in &\bigcap \limits_{2 \leq k \leq K} \Big{\{} \mathbf{t} \big{\arrowvert} \left\|\mathbf{t}-\mathbf{s}_{1}\right\|_{2} -\left\|\mathbf{t}-\mathbf{s}_{k}\right\|_{2} = \lambda(n_{k}+\phi_{k}),\\
			&~~~~~~~~~~~\forall n_{k} \in \mathbb{Z}, 0 \leq \phi_{k} < 1 \Big{\}} \triangleq \mathcal{T}.
		\end{aligned}
	\end{equation}
\end{theorem}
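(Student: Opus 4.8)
The plan is to follow the chain of inequalities \eqref{eq_SNR_bound}--\eqref{eq_SNR_bound2} that produced $\bar{\eta}$ and pin down which inequality fails to be an equality for free in the special case $N=1$ with a single path. Specializing the model, each subarray has a single element at $\mathbf{t}_{m}$, the NFRV $\mathbf{g}_{k}(\mathbf{t}_{m})=\mathrm{e}^{\mathrm{j}\frac{2\pi}{\lambda}\|\mathbf{t}_{m}-\mathbf{s}_{k}\|_{2}}$ and the PRV $\mathbf{b}_{k}=b_{k}$ become scalars, so $[\mathbf{h}_{k}(\tilde{\mathbf{t}})]_{m}=b_{k}\,\mathrm{e}^{-\mathrm{j}\frac{2\pi}{\lambda}\|\mathbf{t}_{m}-\mathbf{s}_{k}\|_{2}}$ with $|[\mathbf{h}_{k}(\tilde{\mathbf{t}})]_{m}|=|b_{k}|$ for every $m$. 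Hence $\|\mathbf{h}_{k}(\tilde{\mathbf{t}})\|_{1}=M|b_{k}|=MN\|\mathbf{b}_{k}\|_{1}$, i.e., the inequality in \eqref{eq_channel_power_one} is automatically tight (the NFRV--PRV phase alignment is vacuous for a single path). Therefore the only place slack can arise in \eqref{eq_SNR_bound}--\eqref{eq_SNR_bound2} is the ABFV-alignment step \eqref{eq_SNR_bound}: the single $\mathbf{w}$ must be co-phased with $\mathbf{h}_{k}(\tilde{\mathbf{t}})$ for \emph{all} $k$ simultaneously, which is the heart of the theorem.

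Next I would show that $\min_{k}\eta_{k}=\bar{\eta}$ forces $|\mathbf{h}_{k}^{\mathrm{H}}(\tilde{\mathbf{t}})\mathbf{w}|=\sqrt{M}\,|b_{k}|$ for every $k$, by a sandwiching argument mirroring \eqref{eq_power_allo_equ}--\eqref{eq_minSNR_bound}. Since \eqref{eq_SNR_bound2} holds for any feasible $(\tilde{\mathbf{t}},\mathbf{w})$, the assumption gives $\bar{\eta}\le\eta_{k}\le MN\|\mathbf{b}_{k}\|_{1}^{2}p_{k}/\sigma^{2}$, hence $p_{k}\ge\sigma^{2}\bar{\eta}/(MN\|\mathbf{b}_{k}\|_{1}^{2})$; summing over $k$, using $\|\mathbf{p}\|_{1}\le P$ and the closed form of $\bar{\eta}$ forces equality throughout, so each $\eta_{k}$ attains its per-user bound \eqref{eq_SNR_bound2}, equivalently $|\mathbf{h}_{k}^{\mathrm{H}}(\tilde{\mathbf{t}})\mathbf{w}|$ reaches its constant-modulus maximum $\|\mathbf{h}_{k}(\tilde{\mathbf{t}})\|_{1}/\sqrt{MN}=\sqrt{M}|b_{k}|$. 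I would then invoke the equality case of the triangle inequality for complex sums: $\big|\sum_{m}[\mathbf{h}_{k}]_{m}^{*}[\mathbf{w}]_{m}\big|=\sum_{m}|[\mathbf{h}_{k}]_{m}|\,|[\mathbf{w}]_{m}|$ holds iff the summands $[\mathbf{h}_{k}]_{m}^{*}[\mathbf{w}]_{m}$ share a common argument, i.e., $\angle[\mathbf{w}]_{m}-\angle[\mathbf{h}_{k}(\tilde{\mathbf{t}})]_{m}$ is independent of $m$, for each $k$.

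For necessity I would subtract the $k=1$ instance from the generic $k$ to eliminate $\mathbf{w}$: $\angle[\mathbf{h}_{k}(\tilde{\mathbf{t}})]_{m}-\angle[\mathbf{h}_{1}(\tilde{\mathbf{t}})]_{m}$ must be independent of $m$ modulo $2\pi$; plugging in the explicit phases, the $m$-independent terms $\angle b_{k}$ and $\angle b_{1}$ drop out, and what remains is that $\frac{2\pi}{\lambda}\big(\|\mathbf{t}_{m}-\mathbf{s}_{1}\|_{2}-\|\mathbf{t}_{m}-\mathbf{s}_{k}\|_{2}\big)$ is constant in $m$ modulo $2\pi$ --- exactly the statement $\|\mathbf{t}_{m}-\mathbf{s}_{1}\|_{2}-\|\mathbf{t}_{m}-\mathbf{s}_{k}\|_{2}=\lambda(n_{k,m}+\phi_{k})$ with $\phi_{k}$ shared by all $m$, i.e., $\mathbf{t}_{m}\in\mathcal{T}$ as in \eqref{eq_opt_APV_ana}. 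For sufficiency I would reverse the chain: if all $\mathbf{t}_{m}\in\mathcal{T}$, the choice $\mathbf{w}=\mathrm{e}^{\mathrm{j}\angle\mathbf{h}_{1}(\tilde{\mathbf{t}})}/\sqrt{MN}$ obeys \eqref{eq_problem_ABF_b} and, by the phase relation just derived, is co-phased with every $\mathbf{h}_{k}(\tilde{\mathbf{t}})$, so $|\mathbf{h}_{k}^{\mathrm{H}}(\tilde{\mathbf{t}})\mathbf{w}|=\sqrt{M}|b_{k}|$; allocating $\mathbf{p}$ by the analogue of \eqref{eq_power_allo_equ} then makes every $\eta_{k}$ equal to $\bar{\eta}$.

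I expect the main obstacle to be the careful modulo-$2\pi$ bookkeeping in the two middle steps: rigorously equating ``a single constant-modulus $\mathbf{w}$ is simultaneously co-phased with all $\mathbf{h}_{k}(\tilde{\mathbf{t}})$'' with ``the pairwise phase offsets $\angle[\mathbf{h}_{k}]_{m}-\angle[\mathbf{h}_{1}]_{m}$ are $m$-independent,'' and then matching the latter against the hyperbola parametrization $(n_{k,m},\phi_{k})$ defining $\mathcal{T}$. It is also worth recording why no analogue of the condition $\sum_{m}\mathrm{e}^{\mathrm{j}2\pi\phi_{k,\hat{k},m}}=0$ from Theorem~\ref{Theo_dig} appears here: in the single-path, $N=1$ regime the channel-power link \eqref{eq_channel_power_one} is already tight, so the fractional offsets need only be \emph{equal} across subarrays rather than satisfy an orthogonality-type cancellation.
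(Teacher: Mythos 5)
Your proposal is correct and follows essentially the same route as the paper's Appendix C: observe that the channel-power step \eqref{eq_channel_power_one} is automatically tight when $N=1$ with a single path, reduce achievability of $\bar{\eta}$ to simultaneous phase alignment of the constant-modulus $\mathbf{w}$ with every $\mathbf{h}_{k}(\tilde{\mathbf{t}})$, and translate the equality case of the triangle inequality (equivalently, unit-modulus correlation with $\mathbf{h}_{1}$) into the $m$-independence of the path-length differences modulo $\lambda$, which is exactly the set $\mathcal{T}$. Your explicit sandwiching argument showing that attaining $\bar{\eta}$ forces each per-user bound \eqref{eq_SNR_bound2} to be tight is a detail the paper glosses over, but it does not change the approach.
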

\begin{proof}
	See Appendix \ref{Appendix_ana}.
\end{proof}
Theorem \ref{Theo_ana} specifies the structure of the optimal APV solution for achieving the SNR upper bound under the condition of a single channel path for each user, where $\mathcal{T}$ is composed of the intersections of a series of hyperbolas determined by parameters $n_{k}$ and $\phi_{k}$. It can be used to obtain an optimal APV by checking the intersection of $\mathcal{T}$ and constraints \eqref{eq_problem_ABF_d} and \eqref{eq_problem_ABF_e}. For example, for the case of $K=2$, set $\mathcal{T}$ is simplified as
\begin{equation}
	\begin{aligned}
		\mathcal{T}= &\Big{\{} \mathbf{t} \big{\arrowvert} \left\|\mathbf{t}-\mathbf{s}_{1}\right\|_{2} -\left\|\mathbf{t}-\mathbf{s}_{2}\right\|_{2} = \lambda(n_{2}+\phi_{2}),\\
		&~~~~~~~~~~~~~~~~~~~~~~~~~\forall n_{2} \in \mathbb{Z}, 0 \leq \phi_{2} < 1 \Big{\}},
	\end{aligned}
\end{equation}
which is a series of hyperbolas w.r.t. focuses $\mathbf{s}_{1}$ and $\mathbf{s}_{2}$. As the number of users increases, the set $\mathcal{T}$ shrinks and it becomes more difficult to achieve the upper bound in \eqref{eq_minSNR_bound}. Next, suboptimal solutions for problem \eqref{eq_problem_ABF} are obtained by a general optimization algorithm.

\subsection{Optimization Algorithm}
For any given APV and ABFV, the optimal power allocation for maximizing the minimum SNR among all $K$ users is given by
\begin{equation}\label{eq_power_allo}
	p_{k} = P\left(\left|\mathbf{h}_{k}^{\mathrm{H}}(\tilde{\mathbf{t}}) \mathbf{w}\right|^{2} \sum \limits_{j=1}^{K} \frac{1}{\left|\mathbf{h}_{j}^{\mathrm{H}}(\tilde{\mathbf{t}}) \mathbf{w}\right|^{2}}\right)^{-1},~1 \leq k \leq K,
\end{equation}
which yields the same SNR for all $K$ users as
\begin{equation}\label{eq_SNR_min}
	\eta(\tilde{\mathbf{t}},\mathbf{w}) = \frac{P}{\sigma^{2}} \left( \sum \limits_{k=1}^{K} \frac{1}{\left|\mathbf{h}_{k}^{\mathrm{H}}(\tilde{\mathbf{t}}) \mathbf{w}\right|^{2}}\right)^{-1}.
\end{equation}

Due to the coupling between the APV and ABFV, we adopt the AO method. Specifically, denote the APV and ABFV obtained in the $(i-1)$-th iteration and $\tilde{\mathbf{t}}^{(i-1)}$ and $\mathbf{w}^{(i-1)}$, respectively. The gradient of $\eta(\tilde{\mathbf{t}},\mathbf{w})$ w.r.t. $\tilde{\mathbf{t}}$ is denoted as $\nabla_{\tilde{\mathbf{t}}} \eta(\tilde{\mathbf{t}},\mathbf{w})$, which is derived in Appendix \ref{Appendix_SNR_grad_APV}. Then, the APV is updated in the $i$-th iteration by the projected gradient ascent method as
\begin{equation}\label{eq_APV_update_ana}
	\tilde{\mathbf{t}}^{(i)} = \mathcal{B} \left\{ \tilde{\mathbf{t}}^{(i-1)} + \bar{\tau}^{(i)} \nabla_{\tilde{\mathbf{t}}} \eta(\tilde{\mathbf{t}}^{(i-1)},\mathbf{w}^{(i-1)}) \right\},
\end{equation}
where $\bar{\tau}^{(i)}$ is the step size obtained by backtracking line search until constraint \eqref{eq_problem_ABF_e} and the Armijo–Goldstein condition are both satisfied. 

Subsequently, we update the ABFV. Due to the constant-modulus constraint in \eqref{eq_problem_ABF_b}, it is not feasible to directly apply the gradient ascent method to ABFV optimization. To address this issue, we define the phase vector of the ABFV as $\boldsymbol{\varphi} = \angle \mathbf{w}$. The gradient of $\eta(\tilde{\mathbf{t}},\mathrm{e}^{\mathrm{j}\boldsymbol{\varphi}}/\sqrt{MN})$ w.r.t. $\boldsymbol{\varphi}$ is denoted as $\nabla_{\boldsymbol{\varphi}} \eta(\tilde{\mathbf{t}},\mathrm{e}^{\mathrm{j}\boldsymbol{\varphi}}/\sqrt{MN})$, which is derived in Appendix \ref{Appendix_SNR_grad_ABFV}. Then, the ABFV is updated in the $i$-th iteration by the gradient ascent method as
\begin{subequations}\label{eq_ABFV_update_ana}
	\begin{align}
		&\mathbf{w}^{(i)} = \frac{1}{\sqrt{MN}} \mathrm{e}^{\mathrm{j} \boldsymbol{\varphi}^{(i)}},\\
		&\boldsymbol{\varphi}^{(i)} = \boldsymbol{\varphi}^{(i-1)} + \hat{\tau}^{(i)} \nabla_{\boldsymbol{\varphi}} \eta(\tilde{\mathbf{t}}^{(i)},\mathrm{e}^{\mathrm{j}\boldsymbol{\varphi}^{(i-1)}}/\sqrt{MN}),
	\end{align}
\end{subequations}
where $\hat{\tau}^{(i)}$ is the step size obtained by backtracking line search until the Armijo–Goldstein condition is satisfied.

The overall algorithm for solving problem \eqref{eq_problem_ABF} is summarized in Algorithm \ref{alg_ABF}. In line 1, the APV is initialized by letting the subarrays uniformly distributed within the a subarea of the antenna moving region according to the user distribution, while the ABFV is initialized as $\mathbf{w}^{(0)}=\mathrm{e}^{\mathrm{j}\boldsymbol{\varphi}^{(0)}}/\sqrt{MN}$, $\boldsymbol{\varphi}^{(0)} = \angle (\sum_{k=1}^{K} \frac{\mathbf{h}_{k}^{\mathrm{H}}(\tilde{\mathbf{t}})}{\|\mathbf{h}_{k}^{\mathrm{H}}(\tilde{\mathbf{t}})\|_{2}})$. The APV and ABFV are alternatively optimized by gradient ascent in lines 4-17, where the algorithm terminates if the increment of the objective function is below a predefined threshold, $\epsilon_{2}$, or it reaches the maximum number of iterations, $I_{2}$. The convergence of Algorithm \ref{alg_ABF} is guaranteed because of the non-decreasing objective function over the iterations. The computational complexity of calculating the gradient in line 4 is $\mathcal{O}(MNLK)$, with $L=\max \limits_{1 \leq k \leq K}\{L_{k}\}$. The complexity of backtracking search in lines 7-10 is $\mathcal{O}(J_{2}MNLK)$, where $J_{2}$ is the maximum number of iterations for backtracking search. Besides, the computational complexity of calculating the gradient in line 11 is $\mathcal{O}(MNK)$. The complexity of backtracking search in lines 14-17 is $\mathcal{O}(J_{3}MNK)$, where $J_{3}$ is the maximum number of iterations for backtracking search. Thus, the total computational complexity of Algorithm \ref{alg_ABF} is $\mathcal{O}(I_{2}MNK(J_{2}L+J_{3}))$.

\begin{algorithm}[t]\small
	\caption{AO-based algorithm for problem \eqref{eq_problem_ABF}.}
	\label{alg_ABF}
	\begin{algorithmic}[1]
		\REQUIRE ~$M$, $N_{x}$, $N_{y}$, $K$, $P$, $\sigma^{2}$, $\lambda$, $A$, $d_{\min}$, $\{\mathbf{q}_{n}\}$, $\{L_{k}\}$,\\ ~~~~~~$\{\mathbf{s}_{k,\ell}\}$, $\{\mathbf{b}_{k}\}$, $\bar{\tau}$, $\bar{\xi}$, $\bar{\mu}$, $\hat{\tau}$, $\hat{\xi}$, $\hat{\mu}$, $\epsilon_{2}$, $I_{2}$.
		\ENSURE ~$\tilde{\mathbf{t}}^{\circ}$, $\mathbf{w}^{\circ}$, and $\mathbf{p}^{\circ}$. \\
		\STATE Initialize the APV $\tilde{\mathbf{t}}^{(0)}$ and the ABFV $\mathbf{w}^{(0)}$.
		\STATE Obtain $\eta(\tilde{\mathbf{t}}^{(0)},\mathbf{w}^{(0)})$ according to \eqref{eq_SNR_min}.
		\FOR   {$i=1:I_{2}$}
		\STATE Calculate the gradient according to \eqref{eq_SNR_grad_x} and \eqref{eq_SNR_grad_y}.
		\STATE Initialize step size $\bar{\tau}^{(i)}=\bar{\tau}$.
		\STATE Update the APV $\tilde{\mathbf{t}}^{(i)}$ according to \eqref{eq_APV_update_ana}.
		\WHILE {\eqref{eq_problem_ABF_e} or Armijo–Goldstein condition is not satisfied}
		\STATE Shrink the step size $\bar{\tau}^{(i)} \leftarrow \bar{\mu} \bar{\tau}^{(i)}$.
		\STATE Update the APV $\tilde{\mathbf{t}}^{(i)}$ according to \eqref{eq_APV_update_ana}.
		\ENDWHILE
		\STATE Calculate the gradient according to \eqref{eq_SNR_grad_w}.
		\STATE Initialize step size $\hat{\tau}^{(i)}=\hat{\tau}$.
		\STATE Update the ABFV $\mathbf{w}^{(i)}$ according to \eqref{eq_ABFV_update_ana}.
		\WHILE {Armijo–Goldstein condition is not satisfied}
		\STATE Shrink the step size $\hat{\tau}^{(i)} \leftarrow \hat{\mu} \hat{\tau}^{(i)}$.
		\STATE Update the ABFV $\mathbf{w}^{(i)}$ according to \eqref{eq_ABFV_update_ana}.
		\ENDWHILE
		\IF    {$\eta(\tilde{\mathbf{t}}^{(i)},\mathbf{w}^{(i)})-\eta(\tilde{\mathbf{t}}^{(i-1)},\mathbf{w}^{(i-1)})<\epsilon_{2}$}
		\STATE Break.
		\ENDIF
		\ENDFOR
		\STATE Set the APV and ABFV as $\tilde{\mathbf{t}}^{\circ}=\tilde{\mathbf{t}}^{(i)}$ and $\mathbf{w}^{\circ}=\mathbf{w}^{(i)}$.
		\STATE Calculate the power allocation vector $\mathbf{p}^{\circ}$ according to \eqref{eq_power_allo}.
		\RETURN $\tilde{\mathbf{t}}^{\circ}$, $\mathbf{w}^{\circ}$, and $\mathbf{p}^{\circ}$.
	\end{algorithmic}
\end{algorithm}

\subsection{Statistical CSI Based Antenna Position Optimization}
To reduce antenna movement overhead, we propose to design the APV based on the statistical CSI over a long time period, while the ABFV is optimized based on instantaneous CSI. The corresponding two-timescale optimization problem for maximizing the ergodic SNR of $K$ users is thus given by
\begin{subequations}\label{eq_problem_ABF_ergodic}
	\begin{align}
		&\mathop{\max}\limits_{\tilde{\mathbf{t}}}~ \mathbb{E}\left\{\mathop{\max}\limits_{\mathbf{w},\mathbf{p}}~
		\min \limits_{1 \leq k \leq K} ~ \eta_{k}(\tilde{\mathbf{t}},\mathbf{w},\mathbf{p}) \right\} \label{eq_problem_ABF_ergodic_a}\\
		&~~\mathrm{s.t.}~~  \eqref{eq_problem_ABF_b},~\eqref{eq_problem_ABF_c},~\eqref{eq_problem_ABF_d},~\eqref{eq_problem_ABF_e},
	\end{align}
\end{subequations}
where the expectation in \eqref{eq_problem_DBF_ergodic_a} is operated over the random instantaneous channels and similar to \eqref{eq_SINR_ergodic}, it is approximated by Monte Carlo simulations as
\begin{equation}\label{eq_SNR_ergodic}
	{\small
		\begin{aligned}
			\mathbb{E}\left\{\mathop{\max}\limits_{\mathbf{w},\mathbf{p}}
			\min \limits_{1 \leq k \leq K}  \eta_{k}(\tilde{\mathbf{t}},\mathbf{w},\mathbf{p}) \right\} \approx \frac{1}{Q_{2}} \sum \limits_{q=1}^{Q_{2}} \mathop{\max}\limits_{\mathbf{w},\mathbf{p}}
			\min \limits_{1 \leq k \leq K}  \eta_{k}^{q}(\tilde{\mathbf{t}},\mathbf{w},\mathbf{p}),
		\end{aligned}
	}
\end{equation}
where $Q_{2}$ is the total number of Monte Carlo simulations and $\eta_{k}^{q}$ denotes the SNR for user $k$ under the $q$-th, $1 \leq q \leq Q_{2}$, independent channel realization. For each independent channel realization, we can also adopt the AO-based approach in Algorithm \ref{alg_ABF}, whereas the gradient w.r.t. APV should be averaged over all the $Q_{2}$ independent channel realizations based on the given statistical CSI knowledge.

\section{Performance Evaluation}
In this section, we provide the simulation results of the considered MA array-enabled near-field communication systems. The simulation setup and benchmark schemes are first illustrated, and then the numerical results under digital beamforming and analog beamforming architectures are presented. 

\subsection{Simulation Setup and Benchmark Schemes}
In the simulation, the height of the BS's MA array is set as $15$ m. The total number of antennas at the BS is set as $MN=64$. Unless otherwise stated, each subarray consists of one antenna for simplicity, i.e. $N=1$. For the subarray with more antennas, it is set as an UPA of size $N_{x} \times N_{y}$ with inter-antenna spacing $\frac{\lambda}{2}$. The antenna moving region is set as a square area of size $A \times A$, with $A=100\lambda$. The minimum distance between MA subarrays is set as $d_{\min}=\frac{\lambda}{2} \times \max\{N_{x},N_{y}\}$. The total number of users is set as $K=32$. They are each equipped with a single FPA and randomly distributed on the ground with zero altitude. The carrier frequency is set as $f_{\mathrm{c}}=30$ GHz. An LoS path is assumed to exist between the BS and each user, with the amplitude of the path response coefficient given by $[\mathbf{b}_{k}]_{0}=\frac{\lambda}{4\pi\|\mathbf{s}_{k,0}\|_{2}}$ \cite{liu2023near}. Due to the significant propagation/penetration loss of millimeter-wave signals, it is assumed that NLoS paths experience severe attenuation and thus have negligible power compared to the LoS path. The maximum transmit power of the BS is $P=20$ dBm and the noise power is $\sigma^{2}=-80$ dBm. For both Algorithms \ref{alg_DBF} and \ref{alg_ABF}, the maximum number of iterations is set as $I_{1}=I_{2}=300$. The threshold for terminating the iteration is set as $\epsilon_{1}=\epsilon_{2}=10^{-5}$. The initial step size for backtracking line search is $\tau=\bar{\tau}=\hat{\tau}=10\lambda$. The factor of shrinking the step size is set as $\mu = \bar{\mu} = \hat{\mu}=0.5$ and the increment speed factor is set as $\xi = \bar{\xi} = \hat{\xi}=0.1$. All the results in this section are the average performance over $Q_{1}=Q_{2}=500$ Monte Carlo simulations of random user locations and channel realizations.

In addition to the proposed MA solution based on instantaneous/statistical CSI, we consider the following FPA-based benchmark schemes for performance comparison.
\begin{itemize}
	\item Upper bound: The upper bound on the minimum SINR for MA-aided digital beamforming systems is given by \eqref{eq_minSINR_bound} and that on the minimum SNR for MA-aided analog beamforming systems is given by \eqref{eq_minSNR_bound};
	\item Dense UPA: The antennas are densely placed to form an UPA of size $8 \times 8$, with inter-antenna spacing $\lambda/2$;
	\item Sparse UPA: The antennas are sparsely placed to form an UPA of size $8 \times 8$, with inter-antenna spacing $A/8$;
	\item Horizontal sparse UPA: The UPA of size $8 \times 8$ is sparse in the horizontal dimension with inter-antenna spacing $A/8$, while dense in the vertical dimension with inter-antenna spacing $\lambda/2$;
	\item Vertical sparse UPA: The UPA of size $8 \times 8$ is sparse in the vertical dimension with inter-antenna spacing $A/8$, while dense in the horizontal dimension with inter-antenna spacing $\lambda/2$;
	\item Horizontal sparse ULA: The antennas are horizontally placed to form a uniform linear array (ULA), with inter-antenna spacing $A/64$;
	\item Vertical sparse ULA: The antennas are vertically placed to form an ULA, with inter-antenna spacing $A/64$;
\end{itemize}
Note that for fair comparison, all the above benchmarks adopt the same number of antennas as that of MA arrays, i.e., $64$. In addition, the DBFM and ABFV are optimized in a similar way to that for MA systems.

\subsection{Numerical Results of Digital Beamforming System}
The main design objective for digital beamforming systems is to enhance the spatial multiplexing performance of multiuser communications. Thus, we consider that the users are randomly distributed within a semicircular region on the ground centered at the BS, where the horizontal distance between the BS and users ranges from $5$ to $50$ m. 
\begin{figure}[t]
	\begin{center}
		\includegraphics[width=\figwidth cm]{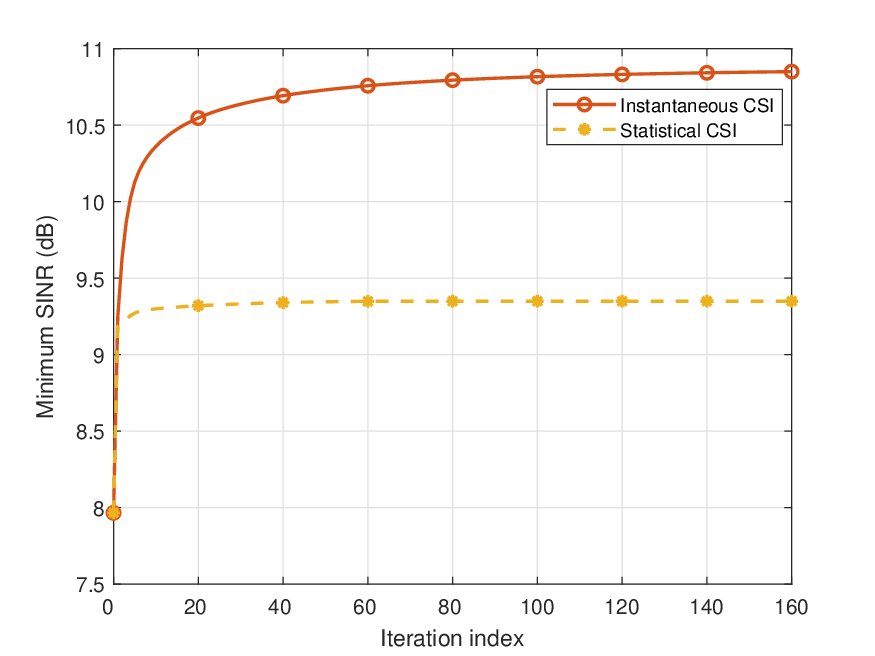}
		\caption{Convergence evaluation of Algorithm \ref{alg_DBF}.}
		\label{Fig_digital_Iteration}
	\end{center}
\end{figure}

We first evaluate in Fig. \ref{Fig_digital_Iteration} the convergence performance of the proposed Algorithm \ref{alg_DBF}. It is observed that for MA systems based on instantaneous CSI or statistical CSI, the algorithm converges after 120 iterations and 40 iterations, respectively, which demonstrates the efficacy of the proposed algorithms. In addition, due to the common APV adopted for all independent channel realizations, the statistical CSI-based MA scheme suffers from a performance loss in the minimum SINR among multiple users. Nonetheless, since the statistical CSI-based MA scheme does not require frequent antenna movement, it can achieve a good trade-off between the communication performance and system overhead.

\begin{figure}[t]
	\begin{center}
		\includegraphics[width=\figwidth cm]{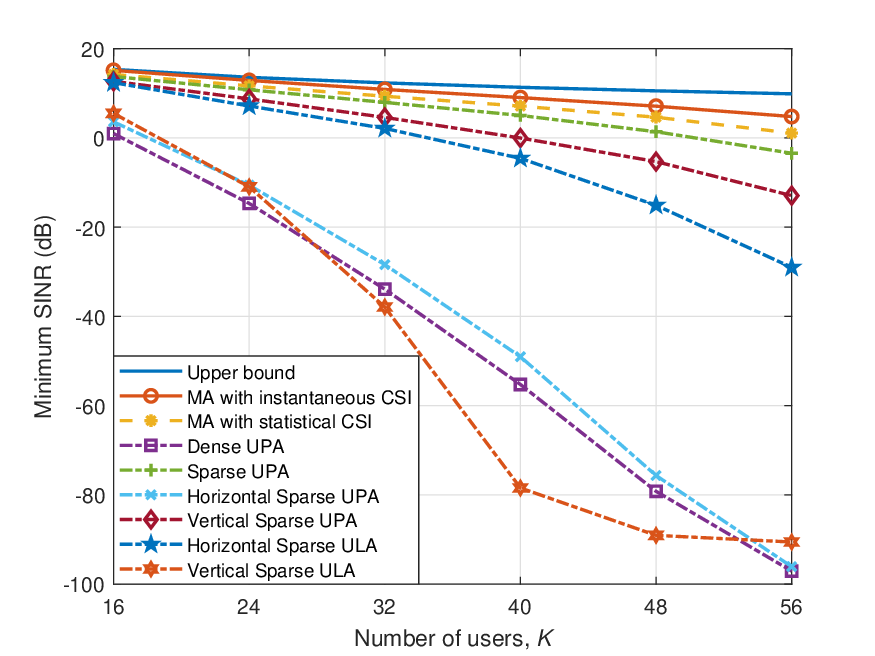}
		\caption{Performance comparison of the minimum SINR achieved by the proposed and benchmark schemes versus the number of users.}
		\label{Fig_digital_K}
	\end{center}
\end{figure}
Fig. \ref{Fig_digital_K} compares the SINR performance of the proposed MA and benchmark schemes with varying number of users. The proposed MA schemes always outperform other benchmark schemes with FPAs because the antenna position optimization can significantly decrease the correlation of users' channel vectors. When the number of users is small, the MA schemes can even approach the performance upper bound on the minimum SINR. As the number of users increases, the SINR performance for all schemes decreases because the BS has to allocate the transmit power to more users and reducing their channel correlation via antenna precoding becomes more difficult. In addition, some useful insights can be observed from the benchmark schemes. Comparing the dense UPA with the sparse UPA, we can find that increasing the array sparsity can decrease the users' channel correlation and thus improve the multiplexing performance. Moreover, the vertical sparse UPA can achieve a performance much better than that of the horizontal sparse UPA. This is because the users are uniformly distributed over the horizontal dimension yet non-uniformly distributed over the vertical dimension. Thus, the horizontal sparsity of the UPA cannot contribute much to decreasing the channel correlation of users. Moreover, the horizontal sparse ULA can achieve dominant near-field effects in the horizontal plane, which helps decrease the users' channel correlation. In comparison, the vertical sparse ULA cannot distinguish with the channel vectors of users with the same distance to the BS on the same horizontal plane, which results in lower SINR as compared to the horizontal sparse ULA. 

\begin{figure}[t]
	\begin{center}
		\includegraphics[width=\figwidth cm]{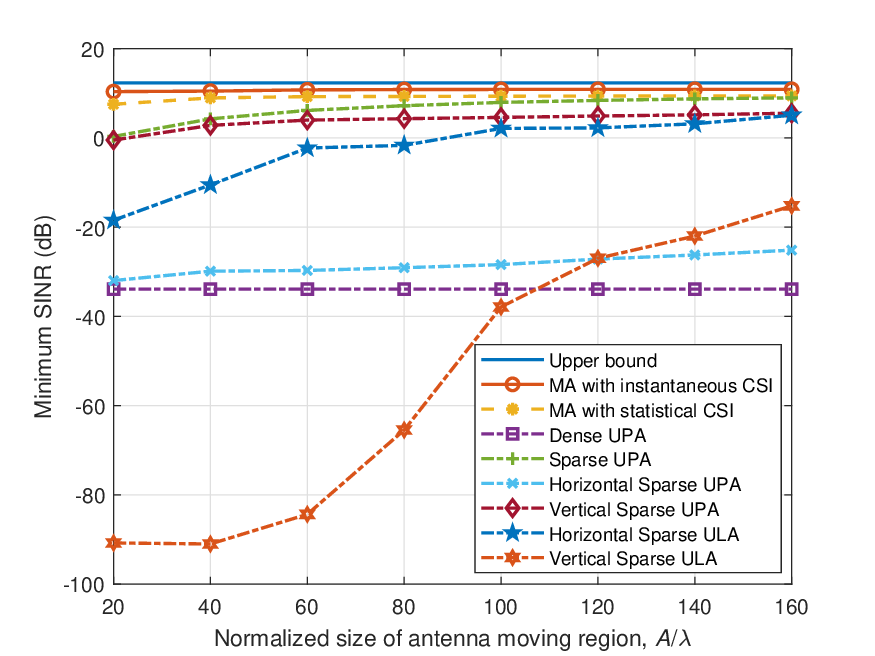}
		\caption{Comparison of the minimum SINR achieved by the proposed and benchmark schemes versus the size of the antenna moving region.}
		\label{Fig_digital_A}
	\end{center}
\end{figure}
In Fig. \ref{Fig_digital_A}, we compare the SINR performance achieved by the proposed and benchmark schemes with varying size of the antenna moving region. The results validate again that the proposed MA schemes can significantly outperform all FPA-based benchmark schemes. It is worth noting that for a small region size, $A=40 \lambda$, the SINR performance of the MA schemes is even comparable to that for $A=160\lambda$. However, the sparse UPA/ULA schemes require larger region sizes for achieving the same SINR performance. It indicates that MA systems can efficiently exploit the DoFs in the spatial domain via antenna position optimization for decreasing the channel correlation among multiple users.

\begin{figure}[t]
	\begin{center}
		\includegraphics[width=\figwidth cm]{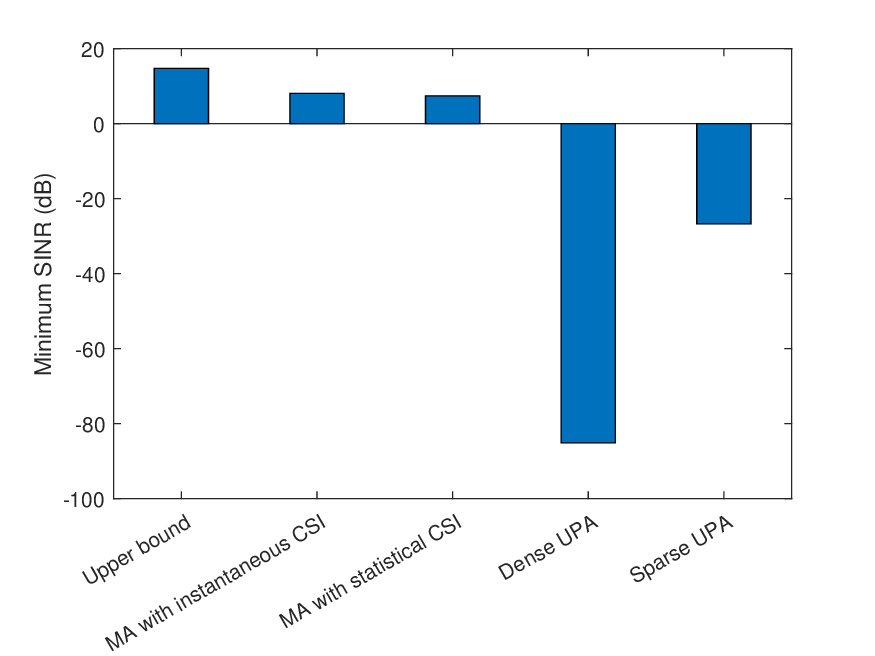}
		\caption{Comparison of the minimum SINR achieved by the proposed and benchmark schemes under nonuniform user distribution.}
		\label{Fig_digital_SINR}
	\end{center}
\end{figure}
\begin{figure}[t]
	\begin{center}
		\includegraphics[width=\figwidth cm]{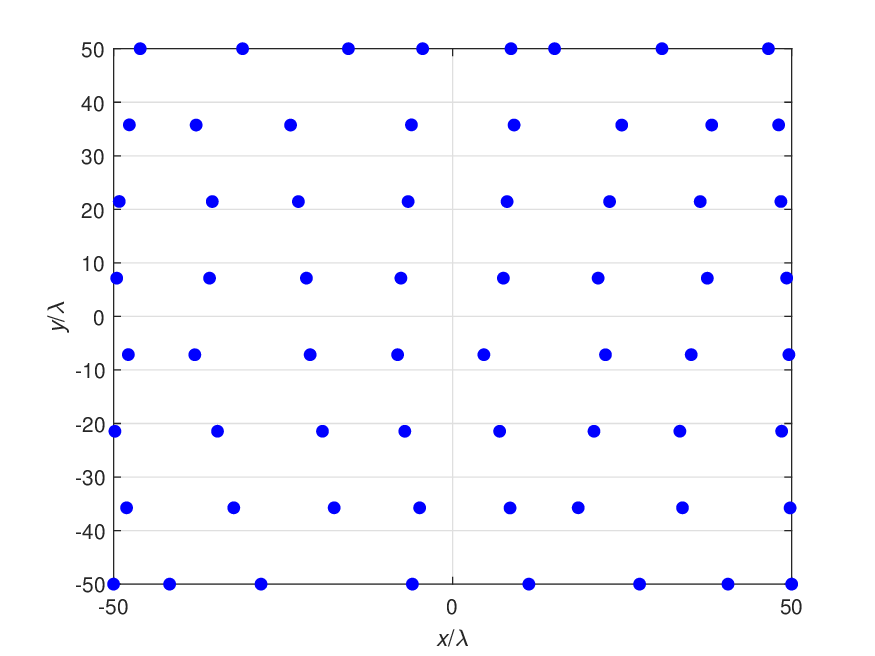}
		\caption{Geometry of the optimized MA array under the digital beamforming architecture.}
		\label{Fig_digital_Geometry}
	\end{center}
\end{figure}
\begin{figure*}[t]
	\centering
	\subfigure[MA array]{\includegraphics[width=5.5 cm]{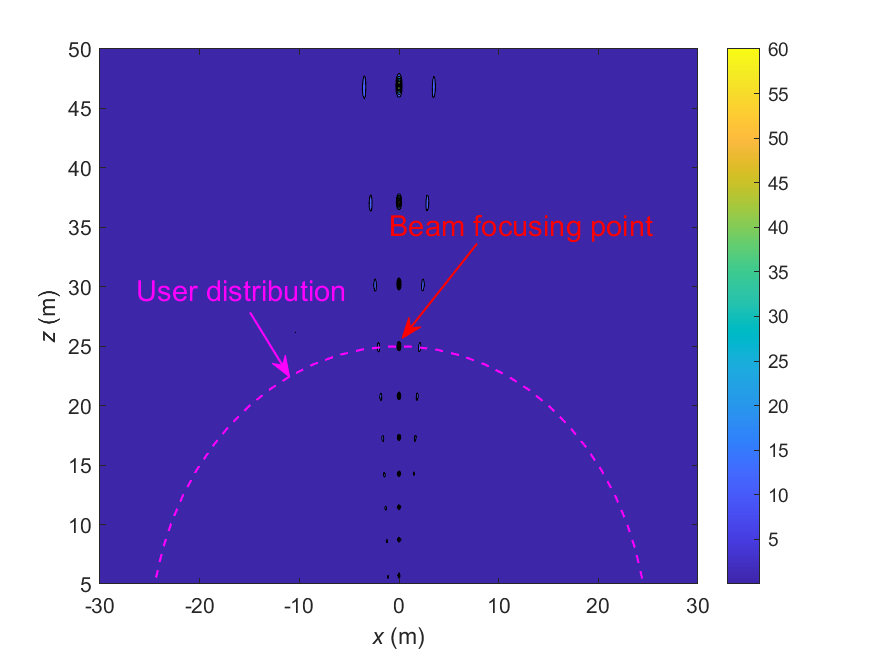} \label{Fig_digital_Beam_MA}}
	\subfigure[Dense UPA]{\includegraphics[width=5.5 cm]{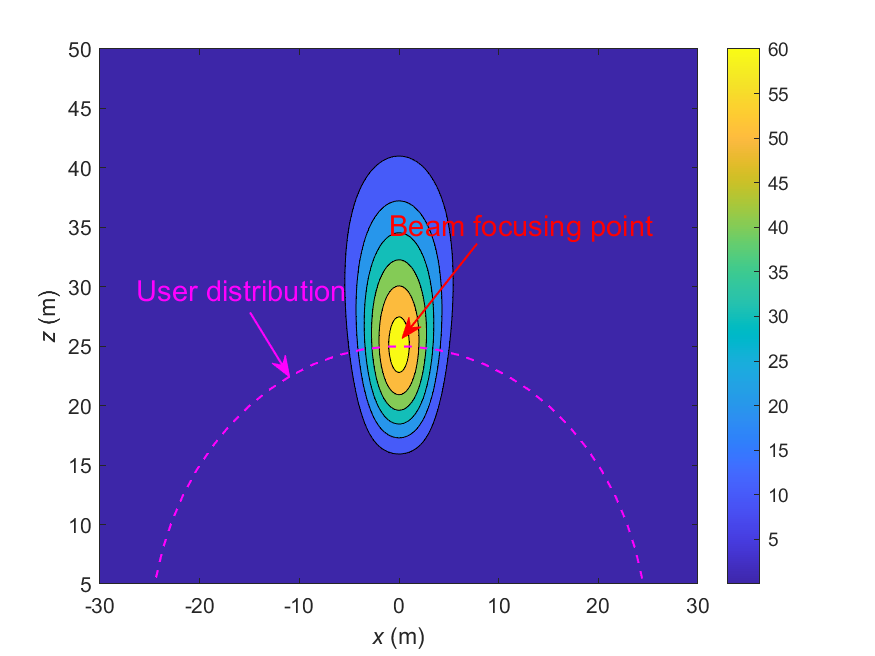} \label{Fig_digital_Beam_dense}}
	\subfigure[Sparse UPA]{\includegraphics[width=5.5 cm]{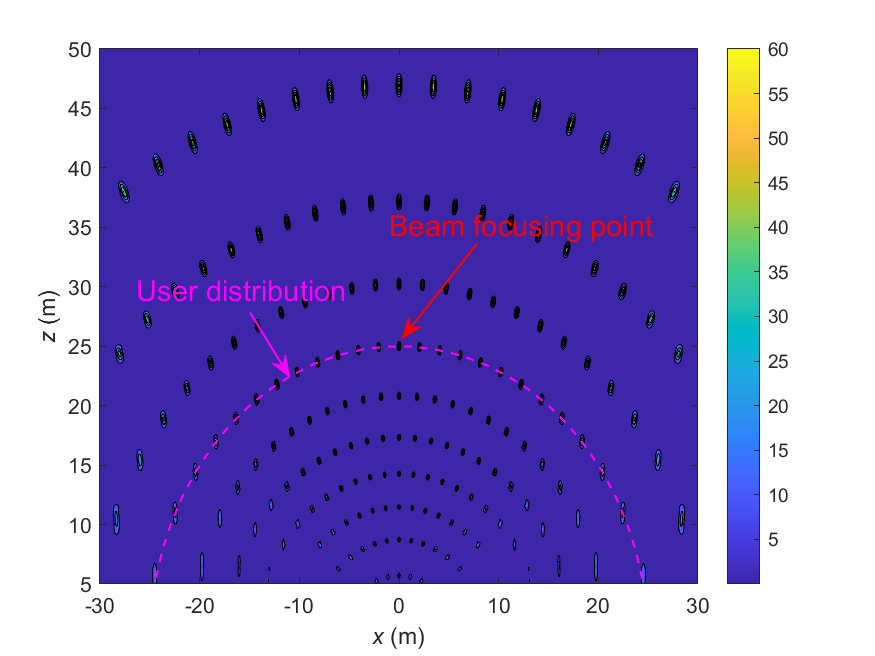} \label{Fig_digital_Beam_sparse}}
	\caption{Beam pattern of different antenna arrays under the digital beamforming architecture.}
	\label{Fig_digital_Beam}
\end{figure*}
To further demonstrate the advantage of MA systems, we also consider the case of non-uniform user distribution. Specifically, the users are randomly distributed within a narrow semicircular region, with the horizontal distance to the BS ranging from $24.9$ to $25.1$ m. The corresponding minimum SINR performance is shown in Fig. \ref{Fig_digital_SINR}. It is observed that the proposed MA scheme can achieve much higher performance gains over both dense UPA and sparse UPA schemes in this case. Moreover, we show in Fig. \ref{Fig_digital_Geometry} the optimized MA array geometry, i.e., the positions of antennas, which are obtained based on the statistical CSI (i.e., the non-uniform user distribution). As can be observed, the MAs are sparsely spaced in a non-uniform manner, which can help decrease the average channel correlation of ground users distributed in the considered region. The beam pattern corresponding to this optimized MA array geometry is shown in Fig. \ref{Fig_digital_Beam_MA}. Specifically, we calculate the near-field array response vector, $[\mathrm{e}^{\mathrm{j}\frac{2\pi}{\lambda}\|\mathbf{t}_{1,1}-\mathbf{s}\|_{2}},\dots,
\mathrm{e}^{\mathrm{j}\frac{2\pi}{\lambda}\|\mathbf{t}_{M,N}-\mathbf{s}\|_{2}}]^{\mathrm{T}}$, over all locations $\mathbf{s}$ within a rectangular area on the ground. The beam is focused on location ($x=0$ m, $z=25$ m), which is realized by setting the beamforming vector as the normalized near-field array response vector over this beam focusing point. Then, the beamforming gain is calculated as the power of the inner product between the near-field array response vector and the beamforming vector. As shown in Fig. \ref{Fig_digital_Beam_MA}, the main lobe of the beam is very narrow and only a small number of sidelobes exist in the considered user region. This indicates that beam that is focused on the position of a user causes little interference leakage to users located at other positions. Thus, the multiuser communication performance can be significantly improved by MA arrays. In comparison, the beam of the dense UPA shown in Fig. \ref{Fig_digital_Beam_dense} has a wide main lobe, while that of the sparse UPA exhibits many strong sidelobes shown in Fig. \ref{Fig_digital_Beam_sparse}, both of which lead to a high channel correlation for users distributed within this region. If the ZF beamformer is adopted at the BS to eliminate the multiuser interference, the effective channel gain of desired user signals will be severely reduced in the benchmark FPA systems. As such, the MA system can significantly outperform FPA systems by flexibly reconfiguring the array geometry to adapt to the user distribution/statistical CSI.

\subsection{Numerical Results of Analog Beamforming System}
Next, we consider the analog beamforming system which aims to increase the beam coverage performance for multiple users. Thus, we consider that the users are randomly distributed within two circular hotspot regions centered at ($x=-25$ m, $z=40$ m) and ($x=25$ m, $z=40$ m), respectively, with the radius of $2.5$ m.

\begin{figure}[t]
	\begin{center}
		\includegraphics[width=\figwidth cm]{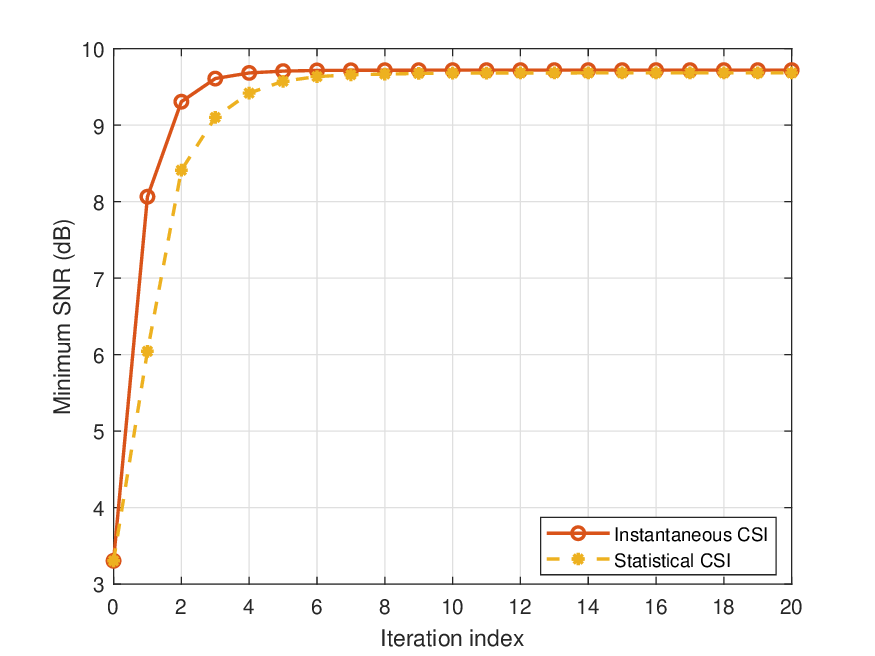}
		\caption{Convergence evaluation of Algorithm \ref{alg_ABF}.}
		\label{Fig_analog_Iteration}
	\end{center}
\end{figure}
Fig. \ref{Fig_analog_Iteration} evaluates the convergence performance of the proposed Algorithm \ref{alg_ABF}. It is observed that the algorithm achieves a rapid convergence within 8 iterations for MA systems based on both instantaneous CSI and statistical CSI. Moreover, the performance of the statistical CSI-based MA scheme is almost the same as that of the instantaneous CSI-based MA scheme in terms of minimum SNR among multiple users. This is because the users are all distributed within the two hotspot regions. The optimized MA array based on statistical CSI can yield high correlation among the channel vectors of all users in both regions, which thus circumvents the frequent antenna movement.

\begin{figure}[t]
	\begin{center}
		\includegraphics[width=\figwidth cm]{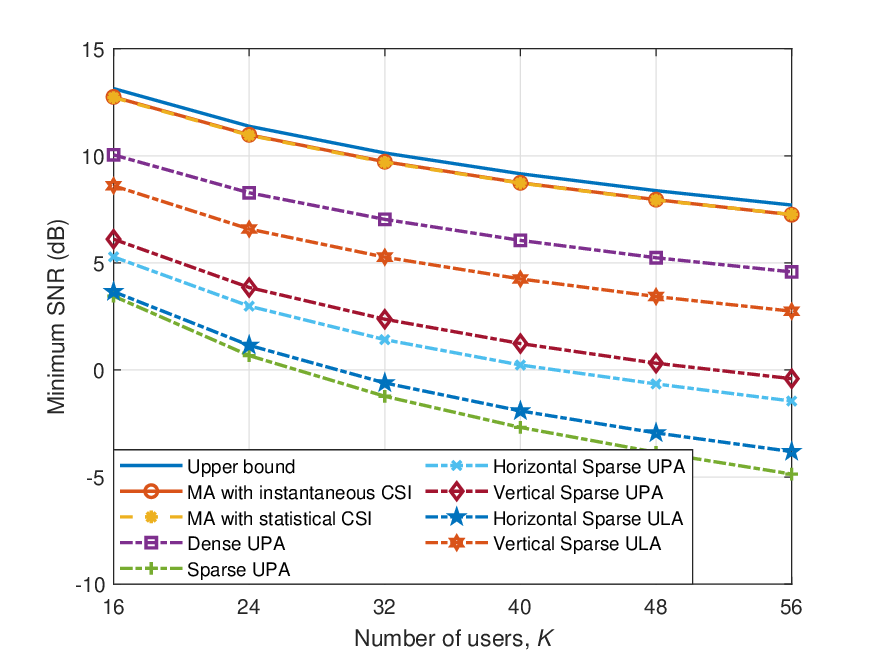}
		\caption{Performance comparison of the minimum SNR achieved by the proposed and benchmark schemes versus the number of users.}
		\label{Fig_analog_K}
	\end{center}
\end{figure}
In Fig. \ref{Fig_analog_K}, we compare the SNR performance achieved by the proposed and benchmark schemes with varying number of users. We can observe that the proposed MA schemes significantly outperform other benchmark schemes based on FPAs. In contrast to the digital beamforming system, the analog beamforming system requires a high correlation of the channel vectors for multiple users such that they can achieve high beamforming gains at the same time under the considered OFDMA-based multiple access. Thus, the dense UPA scheme performs better than the sparse UPA scheme in this case. Moreover, as the number of users increases, the minimum SNR decreases because the BS has to allocate the transmit power to more users. Nevertheless, the proposed MA schemes can achieve a performance close to the derived upper bound due to the flexibility in antenna position optimization.

\begin{figure}[t]
	\begin{center}
		\includegraphics[width=\figwidth cm]{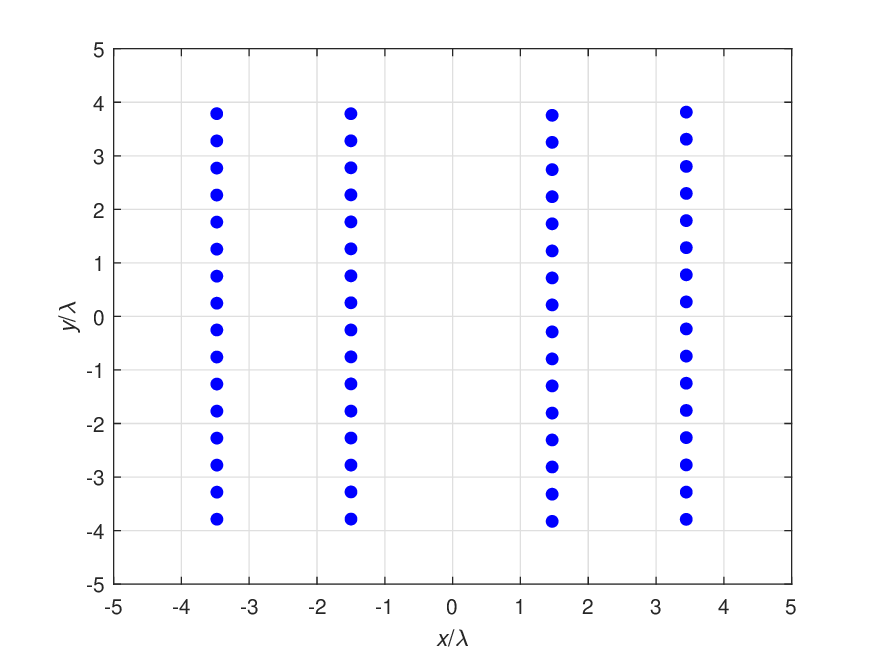}
		\caption{Geometry of the optimized MA array under the analog beamforming architecture.}
		\label{Fig_analog_Geometry}
	\end{center}
\end{figure}
\begin{figure}[t]
	\begin{center}
		\includegraphics[width=\figwidth cm]{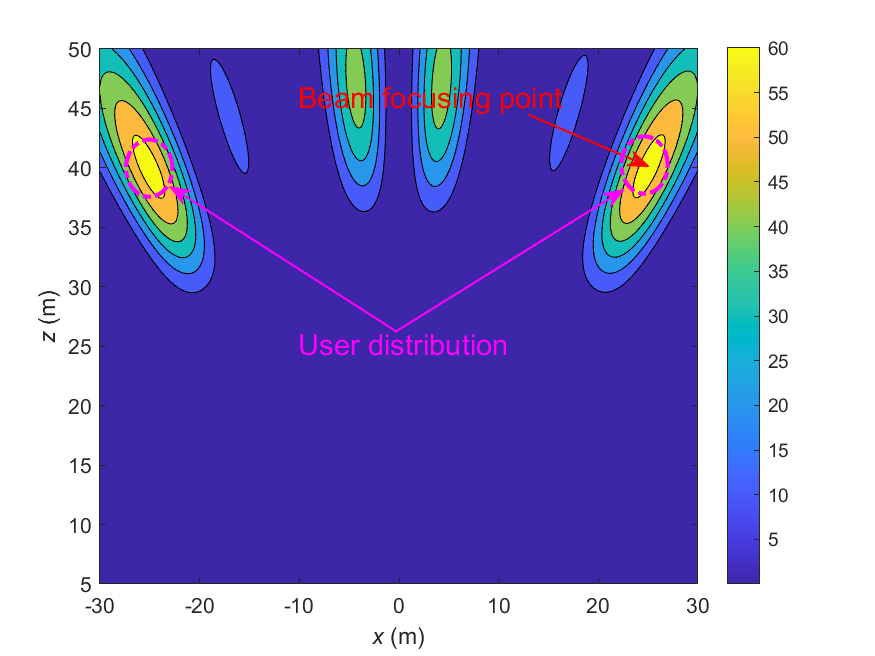}
		\caption{Beam pattern of the optimized MA array under the analog beamforming architecture.}
		\label{Fig_analog_Beam}
	\end{center}
\end{figure}
Next, we show the MA array geometry optimized based on statistical CSI and the corresponding beam pattern in Figs. \ref{Fig_analog_Geometry} and \ref{Fig_analog_Beam}, respectively. In particular, the beam focusing point is set as the center of the hotspot area at the right-hand side. It is observed that the antennas are uniformly distributed in the vertical dimension with a small spacing, which can guarantee a wide beam over this direction. Besides, the antennas are sparsely spaced in the horizontal dimension, which can guarantee a high grating lobe over the other hotspot area at the left-hand side shown in Fig. \ref{Fig_analog_Beam}. By utilizing the optimized MA array, the channel vectors of all users within the two hotspot areas are highly correlated such that the ABFV can yield a beamforming gain close to the upper bound, $MN=64$, for all users.

\begin{figure}[t]
	\begin{center}
		\includegraphics[width=\figwidth cm]{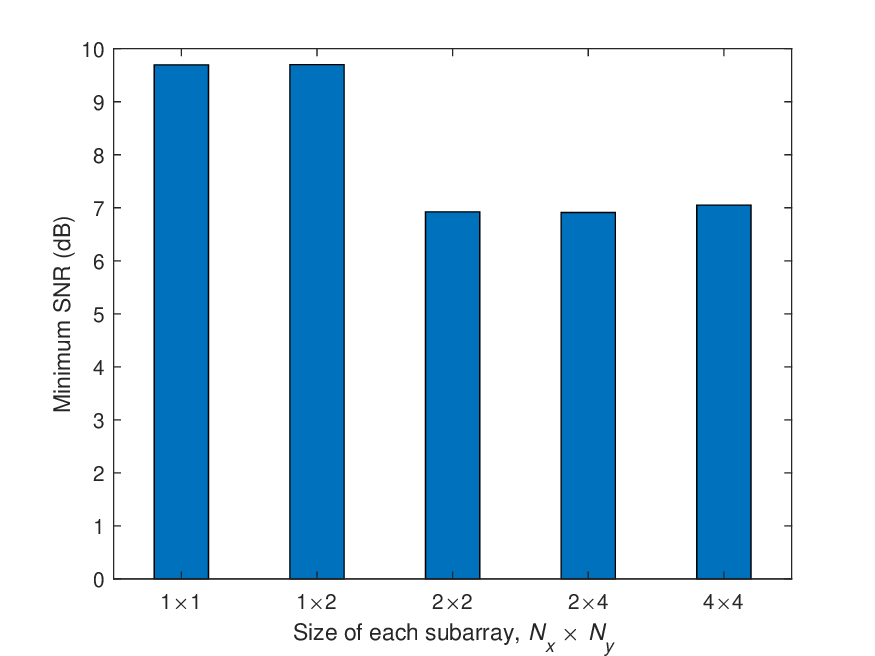}
		\caption{Minimum SNR of the proposed MA scheme based on statistical CSI versus the size of each subarray.}
		\label{Fig_analog_NxNy}
	\end{center}
\end{figure}
Finally, Fig. \ref{Fig_analog_NxNy} shows the performance of the proposed MA scheme based on statistical CSI under different sizes of each subarray. Since the total number of antennas is fixed as $MN=64$, as the size of each subarray increases, the number of subarrays, $M$, decreases. This results in reduced DoFs in antenna position optimization such that the minimum SNR performance of the consider MA-aided analog beamforming system deteriorates. Thus, in practice, an appropriate size of the subarray should be chosen to achieve a desired trade-off between the system performance and hardware cost.

\section{Conclusions}
In this paper, we investigated the MA-enabled near-field communications, where the BS is equipped with multiple subarrays movable in a large region to serve multiple users each equipped with a single FPA. First, we extended the field response channel model customized for MA systems from the far field to the near-field propagation condition, where the spherical wave model was adopted to characterize the spatial variation of wireless channels w.r.t. each MA's position at the BS. Then, we characterized the performance of MA-aided multiuser communication systems under both digital beamforming and analog beamforming architectures. Specifically, for MA systems with digital beamforming, an upper bound on the minimum SINR among multiple users was derived in closed form, following which a low-complexity ZF-based algorithm was developed to jointly optimize the APV and DBFM at the BS for approaching this upper bound. Besides, for MA systems with analog beamforming, an upper bound on the minimum receive SNR among multiple users was derived in closed form. Thereafter, an AO-based algorithm was developed to iteratively optimize the APV, ABFV, and power allocation until convergence. For both architectures, we further extended the designs of MA systems to the case based on statistical CSI, where the APV is updated over a long time period to reduce the antenna movement overhead. Simulation results showed that our proposed algorithms can approach the performance upper bound on the minimum SINR/SNR under the digital/analog beamforming architecture and the MA systems significantly outperform other benchmark schemes based on dense or spare arrays with FPAs. Moreover, our proposed strategy with statistical CSI-based antenna position optimization can achieve a performance comparable to that based on instantaneous CSI, which provides a viable solution for future MA-enabled BSs.

\appendices

\section{Proof of Theorem \ref{Theo_dig}}\label{Appendix_dig}
Under the condition of a single channel path for each user and $N=1$ for each subarray, the channel vector between the BS and user $k$ in \eqref{eq_channel_vec} is simplified as $\mathbf{h}_{k}(\tilde{\mathbf{t}}) = b_{k}[\mathrm{e}^{\mathrm{j}\frac{2\pi}{\lambda} \|\mathbf{t}_{1}-\mathbf{s}_{k}\|_{2}},\mathrm{e}^{\mathrm{j}\frac{2\pi}{\lambda} \|\mathbf{t}_{2}-\mathbf{s}_{k}\|_{2}},\dots,\mathrm{e}^{\mathrm{j}\frac{2\pi}{\lambda} \|\mathbf{t}_{M}-\mathbf{s}_{k}\|_{2}}]^{\mathrm{T}}$. Then, the equality in \eqref{eq_channel_power} always holds, i.e., $\left\|\mathbf{h}_{k}(\tilde{\mathbf{t}})\right\|_{2}^{2} = MN|b_{k}|^{2}$. To guarantee the orthogonality between the channel vectors for different users, i.e., the equality in \eqref{eq_SINR_bound} to hold, the sufficient as well as necessary condition can be derived as
\begin{equation}\label{eq_orthogonal_condi}
	\begin{aligned}
		&\mathbf{h}_{k}^{\mathrm{H}}(\tilde{\mathbf{t}})\mathbf{h}_{\hat{k}}(\tilde{\mathbf{t}}) = 0,~1 \leq k, \hat{k} \leq K\\
		\Leftrightarrow&  \sum \limits_{m=1}^{M} \mathrm{e}^{\mathrm{j}\frac{2\pi}{\lambda} (\|\mathbf{t}_{m} -\mathbf{s}_{\hat{k}}\|_{2}  - \|\mathbf{t}_{m} -\mathbf{s}_{k}\|_{2} )} = 0\\
		\Leftrightarrow&  \|\mathbf{t}_{m} -\mathbf{s}_{\hat{k}}\|_{2}  - \|\mathbf{t}_{m} -\mathbf{s}_{k}\|_{2} = \lambda(n_{k,\hat{k},m}+\phi_{k,\hat{k},m}),  \\
		&\sum \limits_{m=1}^{M} \mathrm{e}^{\mathrm{j} 2\pi \phi_{k,\hat{k},m}} = 0.
	\end{aligned}
\end{equation}
This thus completes the proof.

\section{Derivation of Gradient $\nabla_{\tilde{\mathbf{t}}} \gamma(\tilde{\mathbf{t}})$}\label{Appendix_SINR_grad}
The gradient $\nabla_{\tilde{\mathbf{t}}} \gamma(\tilde{\mathbf{t}})$ is composed of the partial derivative of $\gamma(\tilde{\mathbf{t}})$ w.r.t. $x_{m}$ and $y_{m}$, $1 \leq m \leq M$. Specifically, we have
\begin{equation}\label{eq_SINR_grad_x}{\small
	\begin{aligned}
		&\frac{\partial \gamma(\tilde{\mathbf{t}})}{\partial x_{m}} = -\frac{P}{\mathrm{tr}\{\mathbf{Z}(\tilde{\mathbf{t}})^{-1}\}^{2} \sigma^{2}} \mathrm{tr}\left\{\frac{\partial \mathbf{Z}(\tilde{\mathbf{t}})^{-1}}{\partial x_{m}}\right\}\\
		&=\frac{P}{\mathrm{tr}\{\mathbf{Z}(\tilde{\mathbf{t}})^{-1}\}^{2} \sigma^{2}} \mathrm{tr}\left\{(\mathbf{Z}(\tilde{\mathbf{t}})^{*})^{-1} \frac{\partial \mathbf{Z}(\tilde{\mathbf{t}})}{\partial x_{m}} (\mathbf{Z}(\tilde{\mathbf{t}})^{*})^{-1}\right\},
	\end{aligned}}
\end{equation}
with the entry of $\frac{\partial \mathbf{Z}(\tilde{\mathbf{t}})}{\partial x_{m}}$ in row $i$ and column $j$ given by
\begin{equation*}{\small
	\begin{aligned}
		&\left[\frac{\partial \mathbf{Z}(\tilde{\mathbf{t}})}{\partial x_{m}}\right]_{i,j} = \mathbf{b}_{i}^{\mathrm{H}} \frac{\partial \mathbf{G}_{i}(\tilde{\mathbf{t}})\mathbf{G}_{j}(\tilde{\mathbf{t}})^{\mathrm{H}}}{\partial x_{m}} \mathbf{b}_{j}, ~1 \leq i, j \leq K,\\
		&\left[\frac{\partial \mathbf{G}_{i}(\tilde{\mathbf{t}})\mathbf{G}_{j}(\tilde{\mathbf{t}})^{\mathrm{H}}}{\partial x_{m}}\right]_{l,\ell}
		= \frac{\partial}{\partial x_{m}} \sum \limits_{n=1}^{N} \mathrm{e}^{\mathrm{j} \frac{2\pi}{\lambda} \left(\|\mathbf{t}_{m,n}-\mathbf{s}_{i,l}\|_{2} - \|\mathbf{t}_{m,n}-\mathbf{s}_{j,\ell}\|_{2}\right)}\\
		&=\sum \limits_{n=1}^{N} \mathrm{j} \frac{2\pi}{\lambda} \left(\frac{x_{m}+[\mathbf{q}_{n}]_{1}-[\mathbf{s}_{i,l}]_{1}}{\|\mathbf{t}_{m}+\mathbf{q}_{n}-\mathbf{s}_{i,l}\|_{2}} 
		- \frac{x_{m}+[\mathbf{q}_{n}]_{1}-[\mathbf{s}_{j,\ell}]_{1}}{\|\mathbf{t}_{m}+\mathbf{q}_{n}-\mathbf{s}_{j,\ell}\|_{2}} \right)\\
		& ~~\times \mathrm{e}^{\mathrm{j} \frac{2\pi}{\lambda} \left(\|\mathbf{t}_{m,n}-\mathbf{s}_{i,l}\|_{2} - \|\mathbf{t}_{m,n}-\mathbf{s}_{j,\ell}\|_{2}\right)}, 1 \leq l \leq L_{i}, ~1 \leq \ell \leq L_{j}.
	\end{aligned}}
\end{equation*}
Similarly, we have 
\begin{equation}\label{eq_SINR_grad_y}{\small
	\begin{aligned}
		&\frac{\partial \gamma(\tilde{\mathbf{t}})}{\partial y_{m}} =\frac{P}{\mathrm{tr}\{\mathbf{Z}(\tilde{\mathbf{t}})^{-1}\}^{2} \sigma^{2}} \mathrm{tr}\left\{(\mathbf{Z}(\tilde{\mathbf{t}})^{*})^{-1} \frac{\partial \mathbf{Z}(\tilde{\mathbf{t}})}{\partial y_{m}} (\mathbf{Z}(\tilde{\mathbf{t}})^{*})^{-1}\right\},
	\end{aligned}}
\end{equation}
with the entry of $\frac{\partial \mathbf{Z}(\tilde{\mathbf{t}})}{\partial y_{m}}$ in row $i$ and column $j$ given by
\begin{equation*}{\small
		\begin{aligned}
			&\left[\frac{\partial \mathbf{Z}(\tilde{\mathbf{t}})}{\partial y_{m}}\right]_{i,j} = \mathbf{b}_{i}^{\mathrm{H}} \frac{\partial \mathbf{G}_{i}(\tilde{\mathbf{t}})\mathbf{G}_{j}(\tilde{\mathbf{t}})^{\mathrm{H}}}{\partial y_{m}} \mathbf{b}_{j}, ~1 \leq i, j \leq K,\\
			&\left[\frac{\partial \mathbf{G}_{i}(\tilde{\mathbf{t}})\mathbf{G}_{j}(\tilde{\mathbf{t}})^{\mathrm{H}}}{\partial y_{m}}\right]_{l,\ell}
			= \frac{\partial}{\partial y_{m}} \sum \limits_{n=1}^{N} \mathrm{e}^{\mathrm{j} \frac{2\pi}{\lambda} \left(\|\mathbf{t}_{m,n}-\mathbf{s}_{i,l}\|_{2} - \|\mathbf{t}_{m,n}-\mathbf{s}_{j,\ell}\|_{2}\right)}\\
			&=\sum \limits_{n=1}^{N} \mathrm{j} \frac{2\pi}{\lambda} \left(\frac{y_{m}+[\mathbf{q}_{n}]_{2}-[\mathbf{s}_{i,l}]_{2}}{\|\mathbf{t}_{m}+\mathbf{q}_{n}-\mathbf{s}_{i,l}\|_{2}} 
			- \frac{y_{m}+[\mathbf{q}_{n}]_{2}-[\mathbf{s}_{j,\ell}]_{2}}{\|\mathbf{t}_{m}+\mathbf{q}_{n}-\mathbf{s}_{j,\ell}\|_{2}} \right)\\
			& ~~\times \mathrm{e}^{\mathrm{j} \frac{2\pi}{\lambda} \left(\|\mathbf{t}_{m,n}-\mathbf{s}_{i,l}\|_{2} - \|\mathbf{t}_{m,n}-\mathbf{s}_{j,\ell}\|_{2}\right)}, 1 \leq l \leq L_{i}, ~1 \leq \ell \leq L_{j}.
	\end{aligned}}
\end{equation*}

\section{Proof of Theorem \ref{Theo_ana}}\label{Appendix_ana}
For the case of a single channel path and $N=1$, we have $\mathbf{h}_{k}(\tilde{\mathbf{t}}) = b_{k}[\mathrm{e}^{\mathrm{j}\frac{2\pi}{\lambda} \|\mathbf{t}_{1}-\mathbf{s}_{k}\|_{2}},\mathrm{e}^{\mathrm{j}\frac{2\pi}{\lambda} \|\mathbf{t}_{2}-\mathbf{s}_{k}\|_{2}},\dots,\mathrm{e}^{\mathrm{j}\frac{2\pi}{\lambda} \|\mathbf{t}_{M}-\mathbf{s}_{k}\|_{2}}]^{\mathrm{T}}$. Thus, the equality in \eqref{eq_channel_power_one} always holds, i.e., $\left\|\mathbf{h}_{k}(\tilde{\mathbf{t}})\right\|_{1}^{2} = M^{2}N^{2}|b_{k}|^{2}$. To guarantee the phase alignment between the ABFV and all users' channel vectors, i.e., the equality in \eqref{eq_SNR_bound} to hold for $1 \leq k \leq K$, the sufficient as well as necessary condition can be derived as
\begin{equation}\label{eq_alignment_condi}{\small
	\begin{aligned}
		&\frac{\left|\mathbf{h}_{k}^{\mathrm{H}}(\tilde{\mathbf{t}}) \mathbf{w}\right|}{\left\|\mathbf{h}_{k}(\tilde{\mathbf{t}})\right\|_{2}} = 1, ~1 \leq k \leq K\\
		\Leftrightarrow&\frac{\left|\mathbf{h}_{k}^{\mathrm{H}}(\tilde{\mathbf{t}}) \mathbf{h}_{1}(\tilde{\mathbf{t}})\right|}{\left\|\mathbf{h}_{k}(\tilde{\mathbf{t}})\right\|_{2}\left\|\mathbf{h}_{1}(\tilde{\mathbf{t}})\right\|_{2}} = 1, ~2 \leq k \leq K\\
		\Leftrightarrow&\frac{1}{M}\sum \limits_{m=1}^{M}\mathrm{e}^{\mathrm{j}\frac{2\pi}{\lambda}\left(\left\|\mathbf{t}_{m}-\mathbf{s}_{1}\right\|_{2} -\left\|\mathbf{t}_{m}-\mathbf{s}_{k}\right\|_{2}\right)}=\mathrm{e}^{\mathrm{j}\phi_{k}}, ~2 \leq k \leq K\\
		\Leftrightarrow&\left\|\mathbf{t}_{m}-\mathbf{s}_{1}\right\|_{2} -\left\|\mathbf{t}_{m}-\mathbf{s}_{k}\right\|_{2} = \lambda(n_{k}+\phi_{k}), ~2 \leq k \leq K.
	\end{aligned}}
\end{equation}
This thus completes the proof.

\section{Derivation of Gradient $\nabla_{\tilde{\mathbf{t}}} \eta(\tilde{\mathbf{t}}, \mathbf{w})$}\label{Appendix_SNR_grad_APV}
The gradient $\nabla_{\tilde{\mathbf{t}}} \gamma(\tilde{\mathbf{t}})$ is composed of the partial derivative of $\gamma(\tilde{\mathbf{t}})$ w.r.t. $x_{m}$ and $y_{m}$, $1 \leq m \leq M$. Specifically, we have
\begin{equation}\label{eq_SNR_grad_x}{\small
	\begin{aligned}
		&\frac{\partial \eta(\tilde{\mathbf{t}}, \mathbf{w})}{\partial x_{m}} = \frac{P}{\sigma^{2}}\left( \sum \limits_{k=1}^{K} \frac{1}{\left|\mathbf{h}_{k}^{\mathrm{H}}(\tilde{\mathbf{t}}) \mathbf{w}\right|^{2}}\right)^{-2} \times \\
		&~~~~~~~~~~~~~~~~\sum \limits_{k=1}^{K} \frac{1}{\left|\mathbf{h}_{k}^{\mathrm{H}}(\tilde{\mathbf{t}}) \mathbf{w}\right|^{4}} \times \frac{\partial \mathbf{w}^{\mathrm{H}}\mathbf{h}_{k}(\tilde{\mathbf{t}})  \mathbf{h}_{k}^{\mathrm{H}}(\tilde{\mathbf{t}})\mathbf{w}}{\partial x_{m}},
	\end{aligned}}
\end{equation}
with 
\begin{equation*}{\small
		\begin{aligned}
			&\frac{\partial \mathbf{w}^{\mathrm{H}}\mathbf{h}_{k}(\tilde{\mathbf{t}})  \mathbf{h}_{k}^{\mathrm{H}}(\tilde{\mathbf{t}})\mathbf{w}}{\partial x_{m}}=\\
			&2\Re\Big{\{}\mathbf{w}^{\mathrm{H}}\mathbf{h}_{k}(\tilde{\mathbf{t}}) \sum \limits_{\ell = 1}^{L_{k}}\sum \limits_{n = 1}^{N} [\mathbf{b}_{k}]_{\ell} [\mathbf{w}]_{(m-1)N+n} \times \\
			&~~~~~~\mathrm{j}\frac{2\pi}{\lambda}\frac{x_{m}+[\mathbf{q}_{n}]_{1}-[\mathbf{s}_{k,\ell}]_{1}}{\|\mathbf{t}_{m}+\mathbf{q}_{n}-\mathbf{s}_{k,\ell}\|_{2}}
			\mathrm{e}^{\mathrm{j}\frac{2\pi}{\lambda}\|\mathbf{t}_{m,n}-\mathbf{s}_{k,\ell}\|}\Big{\}}.
	\end{aligned}}
\end{equation*}
Similarly, we have
\begin{equation}\label{eq_SNR_grad_y}{\small
		\begin{aligned}
			&\frac{\partial \eta(\tilde{\mathbf{t}}, \mathbf{w})}{\partial y_{m}} = \frac{P}{\sigma^{2}}\left( \sum \limits_{k=1}^{K} \frac{1}{\left|\mathbf{h}_{k}^{\mathrm{H}}(\tilde{\mathbf{t}}) \mathbf{w}\right|^{2}}\right)^{-2} \times \\
			&~~~~~~~~~~~~~~~~\sum \limits_{k=1}^{K} \frac{1}{\left|\mathbf{h}_{k}^{\mathrm{H}}(\tilde{\mathbf{t}}) \mathbf{w}\right|^{4}} \times \frac{\partial \mathbf{w}^{\mathrm{H}}\mathbf{h}_{k}(\tilde{\mathbf{t}})  \mathbf{h}_{k}^{\mathrm{H}}(\tilde{\mathbf{t}})\mathbf{w}}{\partial y_{m}},
	\end{aligned}}
\end{equation}
with 
\begin{equation*}{\small
		\begin{aligned}
			&\frac{\partial \mathbf{w}^{\mathrm{H}}\mathbf{h}_{k}(\tilde{\mathbf{t}})  \mathbf{h}_{k}^{\mathrm{H}}(\tilde{\mathbf{t}})\mathbf{w}}{\partial y_{m}}=\\
			&2\Re\Big{\{}\mathbf{w}^{\mathrm{H}}\mathbf{h}_{k}(\tilde{\mathbf{t}}) \sum \limits_{\ell = 1}^{L_{k}}\sum \limits_{n = 1}^{N} [\mathbf{b}_{k}]_{\ell}^{*} [\mathbf{w}]_{(m-1)N+n} \times \\
			&~~~~~~\mathrm{j}\frac{2\pi}{\lambda}\frac{y_{m}+[\mathbf{q}_{n}]_{2}-[\mathbf{s}_{k,\ell}]_{2}}{\|\mathbf{t}_{m}+\mathbf{q}_{n}-\mathbf{s}_{k,\ell}\|_{2}}
			\mathrm{e}^{\mathrm{j}\frac{2\pi}{\lambda}\|\mathbf{t}_{m,n}-\mathbf{s}_{k,\ell}\|}\Big{\}}.
	\end{aligned}}
\end{equation*}

\section{Derivation of Gradient $\nabla_{\boldsymbol{\varphi}} \eta(\tilde{\mathbf{t}}, \mathrm{e}^{\mathrm{j}\boldsymbol{\varphi}}/\sqrt{MN})$}\label{Appendix_SNR_grad_ABFV}
The $q$-th component of $\nabla_{\boldsymbol{\varphi}} \eta(\tilde{\mathbf{t}}, \mathrm{e}^{\mathrm{j}\boldsymbol{\varphi}}/\sqrt{MN})$, $1 \leq q \leq MN$, is given by
\begin{equation}\label{eq_SNR_grad_w}{\small
		\begin{aligned}
			&\frac{\partial \eta(\tilde{\mathbf{t}}, \mathrm{e}^{\mathrm{j}\boldsymbol{\varphi}}/\sqrt{MN})}
			{\partial [\boldsymbol{\varphi}]_{q}} 
			= \frac{P}{\sigma^{2}}\left( \sum \limits_{k=1}^{K} \frac{MN}{\left|\mathbf{h}_{k}^{\mathrm{H}}(\tilde{\mathbf{t}}) \mathrm{e}^{\mathrm{j}\boldsymbol{\varphi}}\right|^{2}}\right)^{-2} \times \\
			&~~\sum \limits_{k=1}^{K} \frac{M^{2}N^{2}}{\left|\mathbf{h}_{k}^{\mathrm{H}}(\tilde{\mathbf{t}}) \mathrm{e}^{\mathrm{j}\boldsymbol{\varphi}}\right|^{4}} \times \frac{\partial {(\mathrm{e}^{\mathrm{j}\boldsymbol{\varphi}})}^{\mathrm{H}}\mathbf{h}_{k}(\tilde{\mathbf{t}})  \mathbf{h}_{k}^{\mathrm{H}}(\tilde{\mathbf{t}})\mathrm{e}^{\mathrm{j}\boldsymbol{\varphi}}}{\partial [\boldsymbol{\varphi}]_{q}},
	\end{aligned}}
\end{equation}
with 
\begin{equation*}{\small
		\begin{aligned}
			&\frac{\partial {(\mathrm{e}^{\mathrm{j}\boldsymbol{\varphi}})}^{\mathrm{H}}\mathbf{h}_{k}(\tilde{\mathbf{t}})  \mathbf{h}_{k}^{\mathrm{H}}(\tilde{\mathbf{t}})\mathrm{e}^{\mathrm{j}\boldsymbol{\varphi}}}{\partial [\boldsymbol{\varphi}]_{q}}
			=2\Re\{-\mathrm{j}\mathrm{e}^{-\mathrm{j}[\boldsymbol{\varphi}]_{q}}[\mathbf{h}_{k}(\tilde{\mathbf{t}})]_{q} \mathbf{h}_{k}^{\mathrm{H}}(\tilde{\mathbf{t}})\mathrm{e}^{\mathrm{j}\boldsymbol{\varphi}}\}.
	\end{aligned}}
\end{equation*}

\bibliographystyle{IEEEtran} 
\bibliography{IEEEabrv,ref_zhu}

\end{document}